  \providecommand\BibTeX{{%
    \normalfont B\kern-0.5em{\scshape i\kern-0.25em b}\kern-0.8em\TeX}}}
\newtheorem{theorem}{Theorem}
\newtheorem{lemma}[theorem]{Lemma}
\newtheorem{definition}{Definition}
\newtheorem*{remark}{Remark}
  \providecommand\BibTeX{{%
    \normalfont B\kern-0.5em{\scshape i\kern-0.25em b}\kern-0.8em\TeX}}}
\begin{document}
\title{A Case for Maximal Leakage as a Side Channel Leakage Metric}
\author{Benjamin Wu\\
Electrical and Computer Engineering\\
Cornell University\\
Ithaca, NY 14850\\
\texttt{bhw49@cornell.edu}\\
\And
Aaron B. Wagner\\
Electrical and Computer Engineering\\
Cornell University\\
Ithaca, NY 14850\\
\texttt{wagner@cornell.edu}\\
\And
G. Edward Suh\\
Electrical and Computer Engineering\\
Cornell University\\
Ithaca, NY 14850\\
\texttt{suh@ece.cornell.edu}\\
}

\maketitle              
\begin{abstract}
Side channels represent a broad class of security vulnerabilities that have been demonstrated 
to exist in many applications. Because completely eliminating side channels often leads to 
prohibitively high overhead, there is a need for a principled trade-off between cost and leakage. 
In this paper, we make a case for the use of maximal leakage to analyze such trade-offs. 
Maximal leakage is an operationally interpretable leakage metric designed for side channels. 
We present the most useful theoretical properties of maximal leakage from previous work and 
demonstrate empirically that conventional metrics such as mutual information and channel capacity 
underestimate the threat posed by side channels whereas maximal leakage does not. We also study the cost-leakage trade-off as an optimization problem using maximal leakage. We demonstrate that not only can this problem be 
represented as a linear program, but also that optimal protection can be achieved using a combination of at most two deterministic schemes.
\end{abstract}

\section{Introduction}
Side channels represent a broad class of security vulnerabilities that have received significant 
attention from the cybersecurity community, especially after multiple side channel-based attacks 
have been demonstrated in recent years \cite{Kocher,wright_spot_2008,yan_study_2015}. 
Unfortunately, completely eliminating side channels can incur significant overhead, so practical 
protection techniques often aim to reduce information leakage as much as possible 
while maintaining acceptable performance. In this work, we aim to enable principled protection 
of side channels with passive adversaries. 

The foremost challenge in developing principled protection schemes for
side channels is quantifying the amount of leakage that occurs.
While various reasonable metrics have been considered in the 
literature~\cite{smith_foundations_2009,wagner_technical_2018}, these arguably do not capture, or even necessarily
upper bound, the utility of a given side channel to an attacker.
Recently, \emph{maximal leakage}~\cite{issa} was introduced 
as a metric that quantifies, in an operationally interpretable way,
how useful a given side channel is to an attacker. Armed with
such a metric, the system designer can rigorously trade off
security and performance when designing side-channel mitigation
schemes. Previous work \cite{issa_maximal_2016,issa_operational_2017,liao_tunable_2018} 
has shown that maximal leakage is computable in semi-closed form, 
that it provides a direct measure of how likely a side-channel attack will succeed, 
and that the metric is robust to changes in its underlying assumptions.
So far though, work on applying maximal leakage to practical side channels or on designing optimal protection schemes is limited. Our contributions in this paper are as follows: 
\begin{enumerate}
	\item We conduct an empirical study and find that information theory metrics such as mutual information and channel capacity typically underestimate the utility gained by the adversary whereas maximal leakage at least provides a upper bound. Moreover, we find that mutual information and channel capacity are metrics that result in sub-optimal protection when used to analyze side channel leakage.
	\item We also find that, despite its pessimistic formulation, maximal leakage does not consistently overestimate leakage. In many practical scenarios, it actually measures the adversary's utility without overestimating the threat.
	\item Under assumptions that encompass most timing and power side channels, we prove that cost-leakage optimality is achievable by close-to-deterministic protection, a result that is contrary to normal intuition.
	\item Finally, in case the LP is too large to solve (owing to large dimensionality), we propose a heuristic algorithm that approximates the aforementioned LP and bound the error incurred by using this algorithm.
\end{enumerate} 

In the first part of this paper, we study how maximal leakage can be used to provide strong 
side-channel protection guarantees in practical systems. We first discuss our assumed threat model and present a mathematical interpretation of side channels in Section 2. We then reiterate, from previous work, 
maximal leakage's theoretical advantages over traditional information theoretic metrics in Section 3. 
In Section 4, we provide an empirical study to demonstrate that these advantages are relevant to practical side channels. Our empirical studies lead us to further conclude that 
conventional information-theoretic leakage metrics such as mutual information and channel capacity result in suboptimal protection strategies. 
Surprisingly, we also find that mutual information and channel capacity, while appropriate measures of covert channel leaks \cite{millen_covert_1987}, 
actually underestimate the true threat posed by side channels.

In the second part of the paper, we consider how to design optimal
protection schemes that minimize the performance overheads over a side
channel given a target bound on maximal leakage.
We show that finding the optimal protection scheme can be formulated as a linear program (LP) and present a structural result on optimal protection schemes that allows complete computation of 
the entire optimal trade-off curve by solving a relatively small number of LPs in Section 5. 
This result applies under a set of broad conditions that include (but are not limited to) 
most timing, power, and compression based side channels. In addition, our structural result implies 
that cost-leakage optimality over maximal leakage is achieved by deterministic protection schemes with low implementation overheads. Here we also demonstrate that optimizing over mutual information and channel capacity result in suboptimal protection.
In Section 6, we present a heuristic that leverages knowledge of this structural result to rapidly 
approximate the full trade-off curve, at the cost of a small, bounded amount of sub-optimality. 
Together, these results enable fast computation of cost-leakage trade-offs with principled 
guarantees on leakage. Finally, we discuss related work in Section 7.

\section{Threat Model and Mathematical Interpretation}
In this section, we lay the groundwork to build up a rigorous model of side channels with passive adversaries. We first define our threat model and provide some examples. Then, under that threat model we formulate a mathematical interpretation of side channels and protection schemes.

\subsection{Threat Model} 
Our threat model is given as follows.
\begin{itemize}
	\item There exists a secret that the adversary aims to guess.
	\item An \emph{intermediate} is generated using a fixed function of the secret's value that may be either stochastic or deterministic. The distribution of the intermediate is known, but the conditional distribution of the intermediate given the secret is only necessarily known by the adversary. The intermediate value is not directly visible to the adversary.
	\item There exists a \emph{protection scheme}, which is either a stochastic or deterministic mapping of the intermediate to a side channel \emph{output}. This mapping from intermediate to outputs is presumed to be memoryless (it does not rely on past values of the intermediate) and is known by the adversary. The adversary can see the output value.
	\item The adversary must guess the secret, given the side channel output. They guess the secret value using the strategy that maximizes the probability of a correct guess after seeing the ouput.
	\item The adversary does not have any control over the victim's behavior that would affect the value of the secret, intermediate or the side channel output. We refer to such an adversary as a \emph{passive} adversary.
	\item The system designer only chooses the protection scheme. We presume that the joint distribution of the secret and output are fixed, once chosen. 
	\item We disregard system noise or measurement noise, but note that any independent randomness added to the side channel output cannot increase leakage any further.
\end{itemize}

To put these concepts into perspective, we give two relevant example side channels. First, consider RSA, an asymmetric key cryptosystem often used for key exchanges or identity validation. In systems that provide RSA decryption as a service, the decryption process may form a timing side channel if the decryption time varies with the value of the private key. Suppose for now that no protection scheme has been implemented. In this case, the victim is the implementation of RSA, the secret is the victim's private key, and the side channel output observed by the adversary is the runtime of the decryption algorithm (and the intermediate is the same as the output).

Second, consider speech coding. This is a type of data compression typically used in real-time services such as Voice-over-IP. Given a speech waveform, the encoder converts short snippets of the waveform into individual packets that can be later decoded to recover the snippet. Due to fidelity and rate constraints, it is common for the instantaneous data rate to vary as a function of the speech waveform. Again, suppose no protection scheme is implemented. In this case, the victim is the compression service and the side channel output is the packet size. In this case, the identity of the secret is not necessarily clear; it may be the uncompressed waveform, the transcript, the speaker ID, or even the language spoken. We will return to these two examples later on in Sections 4, 5, and 6.

\subsection{Mathematical Interpretation of Side Channels and Protection Schemes}
We represent the components of a side channel using random variables. First, the victim's secret is denoted as the random variable $U$, the intermediate is denoted as the random variable $X$, and the side channel output is denoted as the random variable $Y$. $X$ is a stochastic or deterministic function of $U$, $X(U)$, and $Y$ is a stochastic or deterministic function of $X$, $Y(X)$. We further assume that $X$ and $Y$ are discrete random variables with finite alphabets $\mathcal{X}$ and $\mathcal{Y}$, respectively, since truly continuous-valued variables are rare in computer systems. By construction, $U$, $X$, and $Y$ form a Markov Chain (denoted as $U-X-Y$), so $Y$ and $U$ are conditionally independent given $X$.

We assume that the relationship between $U$ and $X$ is immutable but that the system designer can control how $Y$ is generated from $X$. Note that introducing the intermediate $X$ into this side channel model does not lose us any generality. $X$ merely represents the boundary of our design problem: how $U$ maps to $X$ is assumed to be outside of our control while how $X$ maps to $Y$ is assumed to comprise our design problem. This model therefore includes side channels in which $U=X$ or $Y=X$ (although the latter simply corresponds to no protection). Therefore, the space of all possible choices of $Y(X)$ constitutes the space of all protection schemes.

Finally, we restrict our attention to $Y(X)$ that do not depend on past values of $X$ (as stated previously, memoryless protection). Such protection schemes ensure that $U$-values are only leaked by their corresponding $Y$-values, if at all. Furthermore, memoryless protection schemes can be represented simply as a transition matrix. This matrix is formatted such that each row corresponds to one element of $\mathcal{X}$, each column corresponds to one element of $\mathcal{Y}$, and each entry is the probability that an $X$-value is mapped to a particular $Y$-value. Figure~\ref{threat_model} summarizes the variables and assumptions we have made thus far. Next, we define some notation that makes use of the above and will be necessary as we discuss maximal leakage as a leakage metric.

 \begin{figure} \label{threat_model}
 	\begin{center}
 		\includegraphics{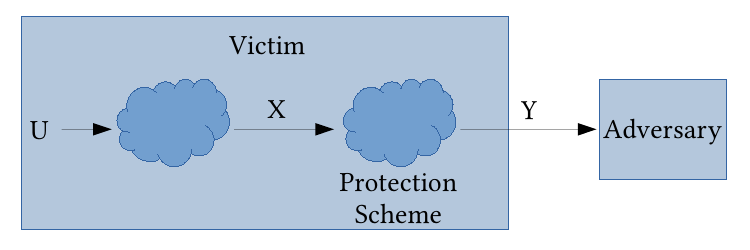} 
 		\caption{Summary of mathematical model of side channels.}
 		\label{threat_model}
 	\end{center}
 \end{figure}

\begin{definition} \label{variables}
	\textbf{(Basic Notation)}
	For random variables $X$ and $Y$, with alphabet sizes $|\mathcal{X}|=M$ and $|\mathcal{Y}|=N$, we define the following:
	\begin{itemize}
		\item $c(x,y)$ is the nonnegative (but not necessarily finite) cost of mapping each $x\in\mathcal{X}$ to each $y\in\mathcal{Y}$. Infinite cost entries correspond to illegal mappings. We refer to this function as the \emph{cost function} and the corresponding matrix $\{c_{xy}\}$ as the \emph{cost matrix}.	
		\item $p(x)$ is the distribution of $X$. $p(y)$ is the distribution of $Y$. $p(x,y)$ is the joint distribution of $X$ and $Y$. $p(y|x)$ is the conditional distribution of $Y$ given $X$.
		\item $\mathscr{C}(\textbf{A})=\mathscr{C}\{a_{xy}\} = \sum_{x,y}{p(x)c(x,y)a_{xy}}$ is the \emph{total cost} for any matrix $\textbf{A}=\{a_{xy}\}$. We will use $\mathscr{C}$ as a shorthand for this quantity, when the parameter matrix is implied.
		\item $L(\textbf{A})$ is the generic leakage value associated with any matrix $\textbf{A}=\{a_{xy}\}$ using some pre-specified leakage metric. When necessary, we will distinguish different metrics using subscripts (e.g. $L_{ML}(\textbf{A})$).
		\item $\textbf{P}=\{p_{xy}\}$ is an $M\times N$ transition matrix such that $p_{xy}=Pr(Y=y|X=x)\quad\forall x,y$ and $\mathscr{C}(\textbf{P})$ is finite. We refer to any matrix of this form as a \emph{protection scheme}. It is subject to typical transition matrix constraints (rows sum to 1, non-negative entries).
	\end{itemize}
\end{definition}

\section{Maximal Leakage}
While many leakage metrics have been proposed, we believe maximal leakage \cite{issa} is well-suited for analyzing side channel leaks. In this section we first define \emph{multiplicative gain leakage}, or \emph{mult-leakage}, a natural metric of side channel leakage. Then, we will summarize maximal leakage's relationship with mult-leakage, why it is necessary to use maximal leakage instead of mult-leakage, and the advantages of maximal leakage. Finally, we briefly discuss mutual information and channel capacity, how they relate to mult-leakage, and how they can be interpreted in the context of side channels.

\subsection{Multiplicative Gain Leakage}
Recall that, in any side channel, the adversary is interested in making an \emph{informed} guess of $U$ after observing $Y$. However, even without the side channel the adversary can make a \emph{blind} guess of $U$, which represents the case where $Y$ is independent of $U$. It stands to reason that a good leakage metric should reflect how the success rate of an informed guess compares to that of a blind guess. Thus, the ratio of the two is a natural measurement of leakage caused by the side channel. So, we define mult-leakage as:

\begin{equation}
L_{mult}(\textbf{P}) = \log{{\frac{\max_{\tilde{u}(\cdot)}{P(U=\tilde{u}(Y))}}{\max_{\tilde{u}}{P(U=\tilde{u})}}}}
\end{equation}

\noindent where the notation $\tilde{u}$ is a blind guess of $U$, and $\tilde{u}(Y)$ is a informed guess of $U$ after observing $Y$. Then, mult-leakage tells us the multiplicative gain on the adversary's probability of correctly guessing $U$ given the side channel (the logarithm is for scaling purposes). Alternatively, mult-leakage can be interpreted as the bit-difference of information between the informed and blind guesses. Thus, if the side channel leaks no information, then the best informed guess is no better than the best blind guess and the leakage is 0. On the other hand, if the side channel does leak information, mult-leakage tells us how much the adversary's guesses have been improved by the side channel and, ultimately, how useful the side channel is to the attacker. 

\subsection{Maximal Leakage}
Maximal leakage is defined as follows \cite{issa}:
\begin{align}
L_{ML}(\textbf{P}) = \max_{\substack{U:U-X-Y}}{\log{\frac{\max_{\tilde{u}(\cdot)}{P(U=\tilde{u}(Y))}}{\max_{\tilde{u}}{P(U=\tilde{u})}}}} = \max_{\substack{U:U-X-Y}}{L_{mult}{(\textbf{P}})}
\end{align}

By definition, maximal leakage is the worst-case mult-leakage over all possible $U$, which means it upper bounds mult-leakage. The reason we elect to use maximal leakage in favor of mult-leakage is that the latter is not always possible to compute, depending on whether the system designer knows $U$. Mult-leakage requires knowledge of $p(x)$, which is often difficult to characterize, especially for complex systems. Certainly, an approximation of mult-leakage is possible through data collection of $X$, but we are interested in upper-bounding leakage. 

At first blush, it seems that maximal leakage also requires knowledge of $p(x)$, but it turns out this is not the case. First, we note that maximal leakage is agnostic to $U$ by definition, so it is well-defined even in side channels where it isn't clear which secret the adversary is after. Furthermore, it can be shown that maximal leakage is can be computed as follows \cite{issa}:
\begin{equation}
L_{ML}(\textbf{P}) = \log{\sum_{y\in \mathcal{Y}}{\max_{x\in\mathcal{X}}{p(y|x)}}}
\end{equation}
In other words, maximal leakage requires knowledge of $p(x)$ only in terms of its support, the set of $X$-values with non-zero $p(x)$. And since we have restricted our attention to memoryless protection, we only need to find the maximum value in each column (ignoring $X$-values with probability of 0) of $\textbf{P}$, sum these values, and take the log of the sum to compute it. Moreover, the maximum mult-leakage is achieved by a particularly useful $U$, referred to as a \emph{shattering} $U$ by the authors. This type of $U$ is characterized by the following two properties. 

\begin{definition} \label{shatteringDef}
	\textbf{(Shattering U)} For any random variable $X$, $U$ is \emph{shattering} if:
	\begin{itemize}
		\item $U$ is uniformly distributed over a finite alphabet $\mathcal{U}$
		\item For each $u\in\mathcal{U}$, $x=X(u)$ is a deterministic value.
	\end{itemize}
\end{definition}

A conceptual example of a shattered distribution is shown in Figure \ref{shattering}. Here, each blue square corresponds to an equally-probable potential value of $U$ (34 distinct values total, in this example). An example distribution of $X$ is shown, where each possible $X$-value corresponds to some number of $U$-values. 

\begin{figure}
	\centering
	\includegraphics{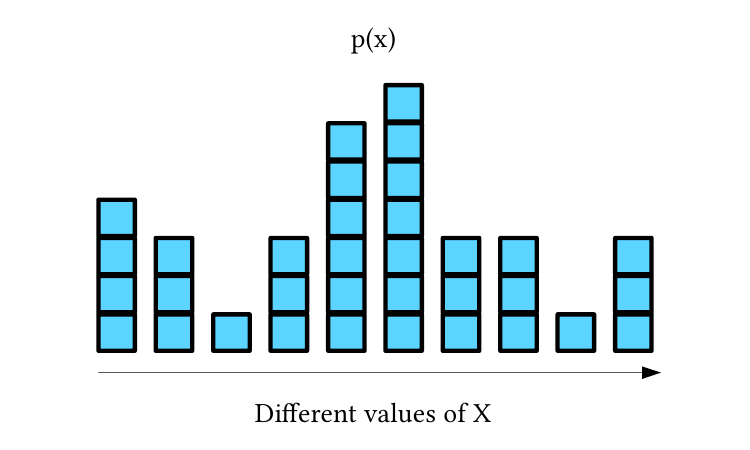}
	\caption{Conceptual example of shattering U.}
	\label{shattering}
\end{figure}

The significance of this shattering distribution is twofold. First, for any shattering $U$, maximal leakage equals mult-leakage. In such cases, maximal leakage measures the the utility gained by the adversary from the side channel under any protection scheme. Second, the shattering $U$ is quite representative of side channels in which $U$ is known to be an encryption key, if keys are selected uniform randomly from some (not necessarily known) space of allowed keys and the baseline side channel process $X(U)$ is not stochastic. For such channels, maximal leakage, though pessimistic by design, exactly captures the side channel leakage. 

\subsection{Other metrics}
\noindent\textbf{Mutual Information:} As one of the most basic information-theoretic metrics, mutual information has been used to evaluate side channel protection in some previous work on side channel protection \cite{gong_quantifying_2016,zhou_camouflage:_2017}.
	\begin{equation} \label{MIDef}
		L_{MI}(\textbf{P})= \sum_{x\in\mathcal{X},y\in\mathcal{Y}} p(x)p(y|x) \log{\frac{p(y|x)}{\sum_{x\in\mathcal{X}}{p(x)p(y|x)}}} 
\end{equation}	
	In plain terms, mutual information measures the amount of information shared between random variables $X$ and $Y$. However, mutual information  requires knowledge of $p(x)$ and is also is upper bounded by maximal leakage \cite{issa}. Since we know that mult-leakage equals maximal leakage for the shattering $U$, we note that mutual information may underestimate leakage in such cases (whether it does in practice is a question we will explore in Section 4).
	
\noindent\textbf{Channel Capacity:} Another basic information-theoretic metric, channel capacity, is used to measure the rate of reliable communication over a noisy channel.
	\begin{equation}
		L_{CC}(\textbf{P})= \max_{p(x)}L_{MI}(\textbf{P})
	\end{equation}
	As such, it is a useful metric for leakage in covert channels where a sender deliberately encodes messages to a receiver. At first blush, it may seem that channel capacity should bound the rate of information leakage in a side channel since the adversary doesn't have the luxury of a cooperative sender. Interestingly, channel capacity is also upper bounded by maximal leakage \cite{issa}. This is because channel capacity assumes the sender and receiver are interested in complete, reliable decoding of messages passed over the channel \cite{issa}. In a side channel scenario, such an assumption is unnecessary because the adversary does not need to completely recover the secret to pose a threat. Indeed, many side channels do not risk complete recovery even without protection.\footnote{For a more nuanced discussion of this topic, refer to Section III of \cite{issa_operational_2017}}

\begin{figure}
	\centering
	\includegraphics{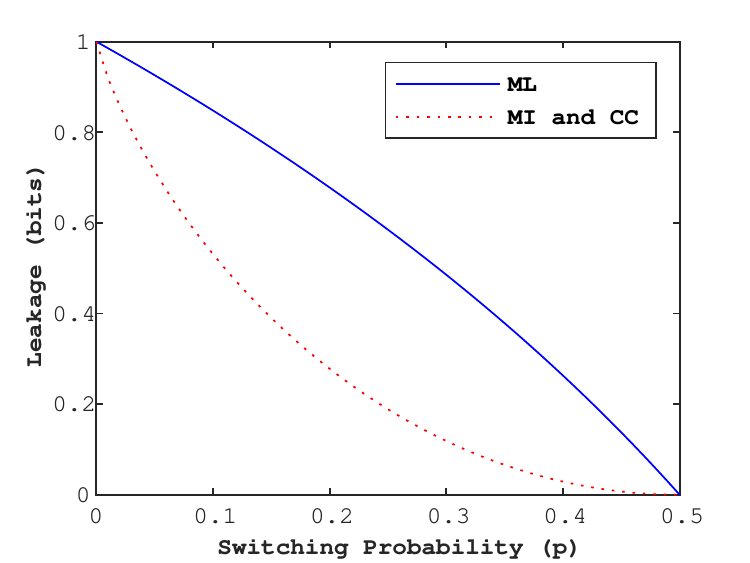}
	\caption{MI, CC, and ML on a BSC with uniform random $X$. Note a BSC is a channel where a uniformly-distributed, single bit input is mapped to a single bit output, but there exists a probability $0\leq p\leq 1$ that the input bit will be flipped. Due to symmetry, we may constrain our attention to $0\leq p\leq 0.5$.}
	\label{BSC_comp}
\end{figure}

Finally, to demonstrate that mutual information, channel capacity, and maximal leakage are numerically comparable (a fact that may not be immediately obvious from their definitions), we analytically compute all three metrics for a binary symmetric channel (BSC) with various switching probabilities $p$. This comparison can be seen in Figure~\ref{BSC_comp}. We can see that all three metrics agree on the worst and best case leakage values, when $p$ is 0 (all metrics agree leakage is 1) and whe $p$ is 0.5 (all metrics agree leakage is 0). However, both mutual information and channel capacity can be less than maximal leakage by an arbitrarily large ratio. Thus, an open question at this point is whether, in practice, maximal leakage is too conservative or whether mutual information and channel capacity are truly underestimating the leakage. We address this question next. 
\section{Empirical Study of Leakage Metrics}
In the previous section, we showed that mutual information/channel capacity could underestimate mult-leakage, at least in theory. However, to truly motivate the use of maximal leakage over these alternatives requires that such gaps between mutual information/channel capacity and mult-leakage do exist in practice, which we will demonstrate in this section. 

The rest of the section will proceed as follows. First, we present an example side channel with which we will compare these metrics: an RSA decryption timing channel. Second, we demonstrate the existence of a significant gap between mutual information and mult-leakage under square-and-multiply implementations of RSA and under GNU's multiple precision (GMP) library implementation. We argue that the existence of such gaps indicates that mutual information and channel capacity are problematic when cost-leakage optimality is desired or when a real bound on leakage is needed. 

\subsection{RSA Decryption Timing Channel}
For the rest of this section, we consider a timing channel involving RSA as seen in Figure~\ref{RSA_threat_model}. In this side channel, Alice serves many clients who need to use Alice's public key to encrypt messages to her over a network. For each such encrypted message, Alice uses her private key to decrypt the message and then responds. The adversary, Eve, is not one of Alice's clients but is capable of observing the network traffic coming to and from Alice, perhaps by employing a packet sniffer (hence, Eve is a passive adversary). Eve sees when each decryption request arrives and when Alice responds from the sequence and timing of packets, which she uses to guess Alice's private key. Here, $U$ is the private key, $X$ is the decryption time, and $Y$ is the length of time between when Alice receives each message and when she sends a response to the client. To implement a protection scheme, we must choose $Y(X)$, which is how long to delay Alice's response on top of the true decryption time.

Finally, we assume that Alice's private key was chosen uniformly at random from all binary strings of a fixed length (as opposed to only legal keys based on the RSA cryptosystem). This is a choice of convenience to facilitate our experiments, but doing so does not affect our conclusions, as we will explain later in this section.

\begin{figure}
	\centering
	\includegraphics{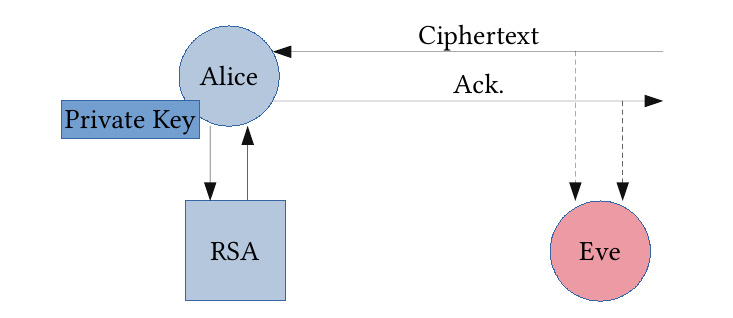}
	\caption{RSA timing side channel example.}
	\label{RSA_threat_model}
\end{figure}

\subsection{Square-and-Multiply Implementation of RSA}
We first consider the square-and-multiply implementation of modular exponentiation, the main sensitive operation of RSA decryptions. Pseudocode for the square-and-multiply implementation of RSA decryption is as follows:

\begin{algorithmic}[1]
	\STATE Inputs $c$ (ciphertext), $u$ (private key), $n$ (modulus)
	\STATE $r \gets 1$
	\STATE $c\gets c\mod n$
	\WHILE {$u>0$}
		\IF {$u\mod2 == 1$} 
			\STATE $r\gets (m*c)\mod n$
		\ENDIF
		\STATE $u\gets u>>1$
		\STATE $c\gets (c*c)\mod n$
	\ENDWHILE
	\RETURN $r$
\end{algorithmic}

The timing channel arises from the if statement in line 5. The modular multiplication of the result and ciphertext only occurs if the next bit of the private key is 1. From this fact, the adversary can deduce the weight (the number of 1s) of the private key. We make several simplifying assumptions and define the parameters of this experiment as follows:

\begin{itemize}
	\item Ignore confounding factors, such as system noise or network delay. $Y$ is simply equal to $X$ plus any delay we choose to add.
	\item Assume all 1024-bit sequences are valid keys.
	\item Assume the bits of the key are independently and identically distributed Bernoulli random variables. This results in a uniform-randomly selected key out of all 1024-bit binary strings.
	\item Let $\textbf{U}=[U_1, U_2, ...U_{1024}]$ be a random vector representing the value of the private key.
	\item Let $X=\sum_{i=1}^{1024} U_i$ be a random variable representing the weight of the private key.
	\item Assume that the multiplication in line 6 of the above pseudocode takes a fixed $K$ milliseconds to execute each time it is called.
	\item Let $Z$ be a binomial random variable with fixed probability $p=\frac{1}{2}$ and size parameter $m$ (which we vary).
	\item Let $Y=X+Z$. In other words, our protection scheme is independently added binomial noise.
	\item Let $c(x,y)=\begin{cases}
		y-x \quad \text{if } y\geq x\\
		\infty\quad \text{otherwise}
	\end{cases}$
	
	Note that this cost matrix enforces that any protection with finite total cost must be upper triangular. Moreover, note that with this cost matrix and independent binomial delays, the total cost is $\frac{m}{2}$.
\end{itemize}

Here, we can analytically compute mutual information, channel capacity, and maximal leakage for many different values of $m$ (the size parameter of the binomial-distributed noise) and plot them on the same axes, as seen in Figure~\ref{SAM_Impl}. We note that there exists a large gap between mutual information and maximal leakage that sharply shrinks as we approach no protection. A sizeable gap exists between maximal leakage and channel capacity as well. Recall that, since $U$ is shattering, maximal leakage equals mult-leakage. Conversely, this implies that both mutual information and channel capacity underestimate mult-leakage in this example, and thus also underestimate the adversary's utility. Note that, had we chosen a key uniform-randomly from the set of all feasible keys (according to the RSA cryptosystem), $U$ would still have been shattering.

\begin{figure} 
	\centering
	\includegraphics{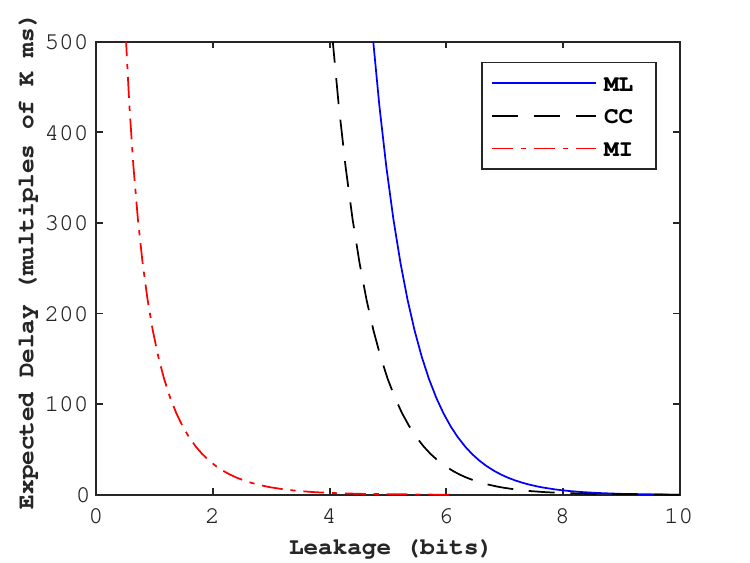}
	\caption{Metric comparison for square-and-multiply RSA with independent binomial noise on a 1024-bit key decryption. Legend: ML is maximal leakage, CC is channel capacity, and MI is mutual information. Note that ML equals mult-leakage since the key is a shattering $U$.}
	\label{SAM_Impl}		
\end{figure}

\subsection{GMP Implementation of RSA}
Here, we consider an implementation of modular exponentiation where the weight of the key isn't directly leaked but instead the decryption time varies with the key in some other way. We show that maximal leakage's gaps with mutual information and channel capacity are still significant even in this case. We use GNU's multiple-precision (GMP) library's implementation of modular exponentiation. Here, we perform essentially the same experiment as before. The assumptions and parameters of the experiment are as follows (only ones that are different from the square-and-multiply implementation will be listed):

\begin{itemize}
	\item Let $\textbf{U}=[U_1, U_2, ...U_{16}]$ be a random vector representing the value of the private key. Note that we are using a 16-bit key here so that it is possible to exhaustively collect decryption timing data for all private keys and ciphertexts, to compute the distribution of decryption times.
	\item Let $X$ be the random variable representing the execution time (in cycles) of GMP's modular exponentiation on an Intel i7 core. The decryption time of each private key varies with the ciphertext, so we uniform randomly selected a fixed ciphertext for the purposes of this experiment. The distribution of $X$ we used can be seen in Figure~\ref{histogram}.
	\item Choose the alphabet of $Y$, $\mathcal{Y}$, as follows. Choose a noise \emph{width} $w$. Extend $\mathcal{X}$ by $w$ elements, each spaced by the most common difference between consecutive elements in $\mathcal{X}$ (in case of a tie, choose the smallest common difference). So for example, suppose $\mathcal{X}=\{1,\ 5,\ 7,\ 9,\ 11,\ 13\}$ and $w=4$. Then $\mathcal{Y}=\{1,\ 5,\ 7,\ 9,\ 11,\ 13,\ 15,\ 17,\ 19,\ 21\}$ since the most common interval between consecutive elements in $\mathcal{X}$ is 2.
	\item Let $Z$ be a binomial random variable with fixed probability $p=\frac{1}{2}$ and size parameter $w$ (which we vary). 
	\item Let $Y(X)$ be  defined as follows. Given $x\in\mathcal{X}$, generate a $Z$-value $z$. Note that $z$ is an integer; choose the $z$th larger element than $x$ in $\mathcal{Y}$.
\end{itemize}

As before, we can directly compute mutual information, channel capacity, and maximal leakage for many different values of $w$. Plotting mutual information, channel capacity, and maximal leakage against total cost (Figure~\ref{GMP_Impl}), we find that a gap exists between maximal leakage and mutual information/channel capacity. These results reaffirm our earlier observation that mutual information and channel capacity underestimate the advantage given to the adversary in practice.

\begin{figure}
	\centering
	\begin{subfigure}[t]{0.45\columnwidth}
		\includegraphics[width=\columnwidth]{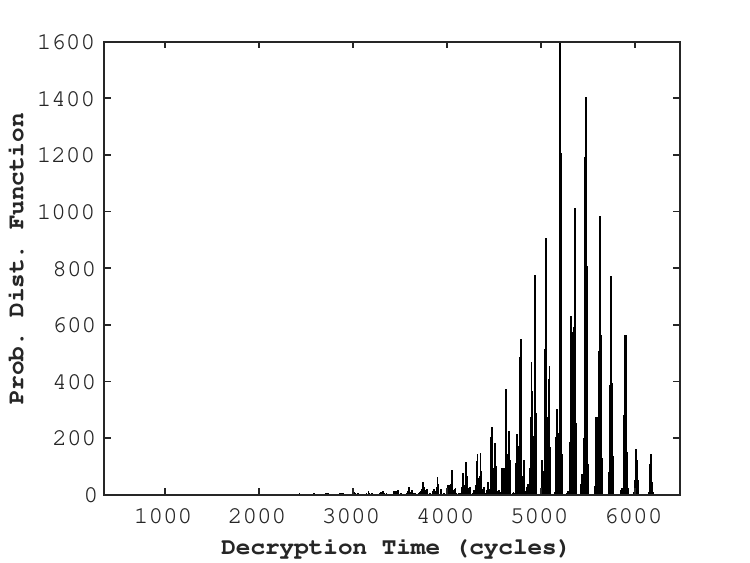}
		\caption{}
		\label{histogram}
	\end{subfigure}
	\begin{subfigure}[t]{0.45\columnwidth}
		\includegraphics[width=\columnwidth]{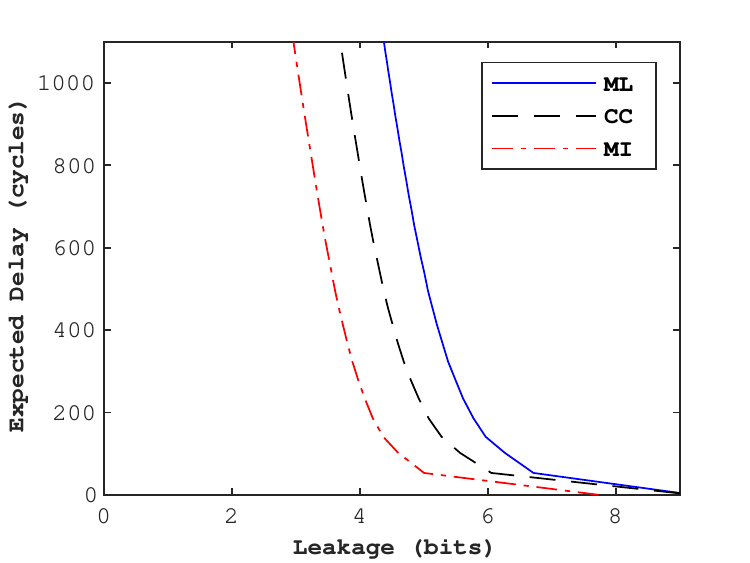}
		\caption{}
		\label{GMP_Impl}
	\end{subfigure}
	\caption{(a) Empirical distribution of decryption time $X$ for uniform randomly selected ciphertext $1110011001001100_2$ ($58956_{10}$). (b) Metric comparison for GMP implementation of RSA with binomial random extension on 16-bit key decryption. Again, note that ML equals mult-leakage.}
\end{figure}

Here, we remark that both in the case of the GMP implementation and in the earlier square-and-multiply implementation, we obtained trade-off curves for mutual information, channel capacity, and maximal leakage with very similar shapes. So, it may be tempting to suggest that there is little difference in usage between the three metrics, since their trade-off curves are so similar in shape. However, there are two factors to keep in mind. First, in the above experiments we have only used independent (of $X$) random padding. Second, we compared how different metrics behave for the same protection scheme. Essentially, we have not shown that mutual information, channel capacity, and maximal leakage agree on a relative ordering of how secure an arbitrary pair of protection schemes (possibly not independent of $X$). Indeed, in the next section, we will prove that mutual information/channel capacity disagree qualitatively on what kinds of protection are optimal. We will even show an example of two protection schemes that maximal leakage disagrees with mutual information/channel capacity in terms of the relative ordering of leakage. 
\section{Optimal ML-Cost Trade-offs}
Using maximal leakage as our leakage metric, we will show two key facts about the minimization of total cost subject to an upper bound on maximal leakage, or the optimization over maximal leakage. First, we demonstrate that the optimization over maximal leakage can be written as a linear program. While this fact is relatively simple to verify, its significance lies in that it greatly simplifies the process of solving the optimization itself (which, in general, is not a trivial feat). Second, as the main theorem of this section, we prove that under certain constraints on the cost matrix $c(x,y)$ that are quite common among side channels, optimality under maximal leakage can be achieved with easy-to-implement deterministic protection schemes. The remainder of the section is dedicated to stating these results rigorously, explaining their implications, and finally comparing maximal leakage optimal schemes with mutual information and channel capacity ones.

\subsection{Formulating the Optimization}
\begin{definition} \label{defTerms}
\textbf{(Definitions)} 

The following definitions are needed to formulate the optimization. For random variables $X$ and $Y$, with alphabet sizes $|\mathcal{X}|=M$ and $|\mathcal{Y}|=N$, we define the following:
	\begin{itemize}
		\item We retain the defined variables and functions given in Definition~\ref{variables}, but will at this point we will retire the previous notation to distinguish between various leakage metrics (e.g. $L_{ML}(\textbf{P})$) in favor of the next item.
		\item $\mathscr{L}(\textbf{A})=\mathscr{L}(\{a_{xy}\}) = \sum_y{\max_x{a_{xy}}}$ is the \emph{exponentiated maximal leakage} (or \emph{exp-leak}, for short) of any matrix $\textbf{A}=\{a_{xy}\}$. We will use $\mathscr{L}$ as a shorthand for this quantity, when the parameter matrix is implied. Note that minimizing over exp-leak is equivalent to minimizing over maximal leakage.
		\item a given protection scheme \textbf{P} is \emph{deterministic} if all $p_{xy}$ equal 0 or 1. It is \emph{stochastic} otherwise.
		\item an $(L,C)$ pair is \emph{achieved} by \textbf{P} if $\mathscr{L}(\textbf{P}) \leq L$ and $\mathscr{C}(\textbf{P}) \leq C$.
		\item an $(L,C)$ pair is \emph{achievable} if there exists such a \textbf{P} that $(L,C)$ is achieved by it.
		\item the set $S$ is the set of all achievable $(L,C)$ pairs.
		\item $C^*(L) = \inf{[C:(L,C)\in S]}$. We refer to $C^*(L)$ evaluated for all values of $L$ as the tradeoff curve and the set of points $S_b = [(L,C)\in S|C = C^*(L)]$ as the boundary of $S$.
		\item \textbf{P} is \emph{optimizing in $S$} if $\mathscr{C}(\textbf{P}) = C^*(\mathscr{L}(\textbf{P}))$ (i.e. if \textbf{P} achieves a point on the boundary of $S$).		
		\item the set $S_d$ is the set of all points in $S$ that can be achieved by a deterministic protection scheme.
		\item an $(L,C)$ pair is \emph{achievable in $S_d$} if there exists a deterministic protection scheme \textbf{P} that achieves $(L,C)$.
		\item $C^*_d(L) = \inf{[C:(L,C)\in S_d]}$.
		\item \textbf{P} is \emph{optimizing in $S_d$} if $\mathscr{C}(\textbf{P}) = C^*_d(\mathscr{L}(\textbf{P}))$. Note that a \textbf{P} that is optimizing in $S_d$ is not necessarily a deterministic protection scheme.
	\end{itemize}
\end{definition}

\begin{remark}
\textbf{(Set Indexing)} 
We will choose to let $\mathcal{X}= \{x_1,x_2,...x_M\}$ and $\mathcal{Y}=\{y_1, y_2,...y_N\}$.
\end{remark}

Here, we consider the optimization problem over maximal leakage:
\begin{equation}
\begin{split}
	C^*(L) = \underset{\textbf{P}}{\min{}} \mathscr{C}(\textbf{P}) \quad \text{s.t. } &\mathscr{L}(\textbf{P})\leq L, \  \sum_y{p_{xy}} = 1 \ \forall x,\\  
	&p_{xy}\geq 0 \ \forall\ x,y\\
\end{split}
\label{optProbML}
\end{equation}

This can be rewritten as an LP as follows:
\begin{equation}
\begin{split}
	C^*(L) = \underset{p_{xy},q_y}{\min{}} \mathscr{C}(\textbf{P}) \quad \text{s.t. } &\sum_y{q_y}\leq L, \  \sum_y{p_{xy}} = 1 \ \forall x,\\
	&p_{xy}\geq 0, \ p_{xy}\leq q_y, \ \forall\ x,y\\
\end{split}
\end{equation}


\subsection{Structural Result and Proof}
\begin{remark} \label{ConvexityThm}
\textbf{(Convexity)} 
$C^*(L)$ is a convex function of $L$. The proof follows from standard arguments. The convexity of the optimal trade-off curve is significant in that it allows for a useful qualitative assessment of the optimization problem. That is, adding a little protection on top of an unprotected side channel is very costly, but relaxing a zero-leakage scheme buys more cost reduction.
\end{remark}

\begin{remark}
Note that $C^*_d(L)$ is not convex since a deterministic protection schemes necessarily has an integer exp-leak value. The space $S_d$ is a subset of $S$ given by all $(L,C)\in S$ pairs dominated by the set of finite points $(L,C^*_d(L))$ for integer $L$ values . The boundary of $S_d$ is shaped like a descending staircase, and $S_d$ is the set of all points above and to the right of this stair-like boundary.
\end{remark}

\begin{definition}\label{stairNondec}
\textbf{(Cost Constraints)} 

We refer to a cost function/matrix that satisfies the following constraints as \emph{staircase nondecreasing}:
\begin{enumerate}
		\item For $0<i<j\leq M$ and all $y\in\mathcal{Y}$, if $c(x_i,y)=\infty$, then $c(x_j,y)=\infty$. (i.e. if one matrix element is infinite, then that column is infinite all the way down).
		\item For $0<i<j\leq N$ and all $x\in\mathcal{X}$, if $c(x,y_i)<\infty$, then $c(x,y_i)\leq c(x,y_j)<\infty$. (i.e. excluding infinities, each row of the matrix is nondecreasing from left to right).
	\end{enumerate}
\end{definition}

Note that staircase nondecreasing cost matrices are exemplified by upper triangular cost matrices (where all entries below the diagonal are infinite cost) with ordered cost entries for each row. This special case of staircase infinite, nondecreasing cost matrices is typical of most power and timing side channels, since one cannot map power consumption or latency to a value less than itself.

%
%
%

\begin{theorem} \label{mainThm}
\textbf{(Main Theorem)} 

If $c(x,y)$ is staircase nondecreasing, then
\begin{enumerate}
	\item $\underset{(\mathscr{L},\mathscr{C})\in S}{\min} \mathscr{C} + \alpha \mathscr{L} = \underset{(\mathscr{L},\mathscr{C})\in S_d}{\min} \mathscr{C} + \alpha \mathscr{L}\quad\forall\alpha>0$
	\item For all $L\geq 1$, $(L,C^*(L))$ can be achieved by $\textbf{P}=\lambda\textbf{P}_1+(1-\lambda)\textbf{P}_2$ for some $\lambda\in[0,1]$ and some deterministic protection schemes $\textbf{P}_1$ and $\textbf{P}_2$, such that $\mathscr{L}(\textbf{P})\leq L$ and $C^*(L) \leq \lambda C^*_d(\mathscr{L}(\textbf{P}_1))+(1-\lambda)C^*_d(\mathscr{L}(\textbf{P}_2))$.
\end{enumerate}
\end{theorem}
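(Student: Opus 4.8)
Our plan is to reduce both conclusions to a single \emph{Lagrangian lemma}: for every $\alpha>0$, the minimum of $\mathscr{C}(\textbf{P})+\alpha\mathscr{L}(\textbf{P})$ over all valid protection schemes $\textbf{P}$ is attained by a \emph{deterministic} one. Granting this, part~1 is essentially bookkeeping: since $S$ (resp.\ $S_d$) is the ``up and to the right'' closure of the image of all schemes (resp.\ of the finitely many deterministic schemes) under $\textbf{P}\mapsto(\mathscr{L}(\textbf{P}),\mathscr{C}(\textbf{P}))$ and $\alpha>0$, the two minima in part~1 equal $\min_{\textbf{P}}[\mathscr{C}(\textbf{P})+\alpha\mathscr{L}(\textbf{P})]$ and $\min_{\textbf{P}\ \mathrm{det}}[\mathscr{C}(\textbf{P})+\alpha\mathscr{L}(\textbf{P})]$; the inclusion $S_d\subseteq S$ gives one inequality and the Lagrangian lemma the other.

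The crux — and the only place the staircase-nondecreasing hypothesis is used — is the Lagrangian lemma, which we would prove by a rounding argument. A minimizer $\textbf{P}^*$ exists because the feasible polytope $\{\textbf{P}:\sum_y p_{xy}=1,\ p_{xy}\ge0,\ p_{xy}=0\text{ whenever }c(x,y)=\infty\}$ is compact and $\mathscr{C}+\alpha\mathscr{L}$ is continuous. To round $\textbf{P}^*$ to a deterministic scheme without increasing the objective, we would first \emph{freeze the column maximizers}: for each $y$ pick $x^*(y)$ attaining $\max_x p^*_{xy}$, adjoin the constraints $p_{xy}\le p_{x^*(y),y}$ for all $x,y$, and on the resulting sub-polytope replace $\mathscr{L}(\textbf{P})$ by the linear form $\sum_y p_{x^*(y),y}$; the objective becomes linear, $\textbf{P}^*$ is still optimal, and we may pass to a vertex $\textbf{P}^\dagger$ of the optimal face. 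The remaining claim is that $\textbf{P}^\dagger$ is deterministic. The lever is that the two staircase conditions force the finite-cost supports to be nested down each column and to be suffixes along each row, with costs nondecreasing along rows; hence shifting probability mass \emph{leftward} within a row never increases $\mathscr{C}$, and — because of the nesting — such a shift, or a mass-cancelling perturbation around a cycle in the support graph when a single-row shift is blocked by one of the frozen-maximizer caps, can be chosen so that no column maximum rises. A non-deterministic $\textbf{P}^\dagger$ would then admit a cost-non-increasing, leakage-non-increasing perturbation in two opposite directions, contradicting optimality or vertexhood. We expect this last step to be the main obstacle: a leftward shift, though cheaper, may raise a \emph{different} column's maximum, so the delicate part is the simultaneous control of the changes in $\mathscr{C}$ and $\mathscr{L}$, and it is exactly here that the precise form of the staircase conditions must be exploited. (An alternative is to argue directly from a basic optimal solution of the linear program formulated above, bounding the structure of its support graph.)

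Given part~1, we would deduce part~2 by convex duality. The set $S$ is convex ($\mathcal{P}$ convex, $\mathscr{L}$ convex, $\mathscr{C}$ linear) and closed ($\mathcal{P}$ compact, $\mathscr{L},\mathscr{C}$ continuous), so $C^*$ is a closed, nonincreasing, convex function on its domain and hence equals the supremum of its affine minorants: $C^*(L)=\sup_{\alpha\ge0}[g(\alpha)-\alpha L]$, where $g(\alpha)=\inf_{L'}[C^*(L')+\alpha L']=\inf_{\textbf{P}}[\mathscr{C}(\textbf{P})+\alpha\mathscr{L}(\textbf{P})]$. By part~1, grouping deterministic schemes according to the integer $k=\mathscr{L}(\textbf{P})\in\{1,\dots,N\}$, we get $g(\alpha)=\min_{1\le k\le N}[C^*_d(k)+\alpha k]$, a minimum of finitely many affine functions of $\alpha$; feeding this back shows $C^*$ is exactly the lower convex envelope of the finite planar point set $\{(k,C^*_d(k)):1\le k\le N,\ C^*_d(k)<\infty\}$. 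Now fix $L\ge1$. If $L$ is at or beyond the rightmost vertex of this envelope, $C^*$ is flat there and one takes $\textbf{P}_1=\textbf{P}_2$ equal to a cost-minimal deterministic scheme; otherwise $(L,C^*(L))$ lies on a segment of the envelope joining $(k_1,C^*_d(k_1))$ and $(k_2,C^*_d(k_2))$ with $k_1\le L\le k_2$, and we pick $\lambda\in[0,1]$ with $\lambda k_1+(1-\lambda)k_2=L$ together with deterministic $\textbf{P}_1,\textbf{P}_2$ satisfying $\mathscr{C}(\textbf{P}_i)=C^*_d(k_i)$ and $\mathscr{L}(\textbf{P}_i)\le k_i$ (note $C^*_d(\mathscr{L}(\textbf{P}_i))=C^*_d(k_i)$ since $C^*_d$ is nonincreasing and $C^*_d(\mathscr{L}(\textbf{P}_i))\le\mathscr{C}(\textbf{P}_i)=C^*_d(k_i)$). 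Setting $\textbf{P}=\lambda\textbf{P}_1+(1-\lambda)\textbf{P}_2$, subadditivity of $\mathscr{L}$ gives $\mathscr{L}(\textbf{P})\le\lambda\mathscr{L}(\textbf{P}_1)+(1-\lambda)\mathscr{L}(\textbf{P}_2)\le L$, and linearity of $\mathscr{C}$ gives $\mathscr{C}(\textbf{P})=\lambda C^*_d(k_1)+(1-\lambda)C^*_d(k_2)=C^*(L)$ since $(L,C^*(L))$ is on the segment; thus $\textbf{P}$ achieves $(L,C^*(L))$ and $C^*(L)=\lambda C^*_d(\mathscr{L}(\textbf{P}_1))+(1-\lambda)C^*_d(\mathscr{L}(\textbf{P}_2))$, which is the asserted inequality.
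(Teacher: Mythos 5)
Your overall architecture is sound and closely mirrors the paper's: part~1 is reduced to the claim that for every $\alpha>0$ the Lagrangian $\mathscr{C}+\alpha\mathscr{L}$ is minimized by a deterministic scheme, and part~2 is then recovered by conjugate/convex-hull duality exactly as in the appendix (the paper invokes Rockafellar's biconjugation results to show $S$ and $S_d$ have the same convex hull, and reads off the two-scheme mixture from the staircase shape of $C^*_d$). Your part~2 is essentially the paper's argument and is fine.

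The genuine gap is in the proof of your Lagrangian lemma, which is the entire technical content of the theorem, and you concede as much when you write that the final step is ``the main obstacle.'' Passing to a vertex of the optimal face after freezing the column maximizers does not by itself yield a deterministic scheme: the polytope cut out by the row-sum constraints together with the caps $p_{xy}\le p_{x^*(y),y}$ is not integral, so vertices can be fractional, and the claim that a non-deterministic vertex admits a two-sided objective-preserving perturbation is precisely what must be constructed. The difficulty you correctly identify --- a leftward, cost-non-increasing shift in one row can raise a \emph{different} column's maximum and hence increase $\mathscr{L}$ --- is resolved in the paper by two pieces of machinery you do not supply: (i) a \emph{water-filling} normalization (Lemma~\ref{waterFillingLemma}) showing every optimizer can be put in a canonical form where each row's mass is packed left subject to the fixed column maxima, which is where the staircase hypothesis enters; and (ii) an explicit perturbation matrix $\textbf{Q}$ (Definition~\ref{QGen}) that moves \emph{all} maxed-out entries of a fractional column in lockstep and chains across rows so that row sums stay zero, $\mathscr{L}$ varies linearly, and at the endpoints $\delta_\pm$ a counting function (number of fractional columns plus hanging entries) strictly decreases, guaranteeing termination. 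Without an analogue of this construction your argument is a plan rather than a proof; as written, the step ``a mass-cancelling perturbation around a cycle \dots can be chosen so that no column maximum rises'' is asserted, not established, and it is exactly the point at which a naive perturbation fails.
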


The proof proceeds by taking an optimizing stochastic protection scheme and showing that it can be made more deterministic without losing optimality. Full proof in Appendix~\ref{appProofs}.

\subsection{Implications of Theorem \ref{mainThm}}
The main implication of Theorem~\ref{mainThm} is that deterministic protection schemes are sufficient to achieve optimality over maximal leakage. The first part of the theorem essentially states that the supporting hyperplanes of the trade-off space are achieved by deterministic schemes. The second part follows from the first and states that the constrained optimization is solved by at least the next best thing: a mixture of at most two deterministic schemes.

First, recall the implementation benefits of deterministic protection schemes. Regardless of the specific cost function (as long as it is staircase nondecreasing) or application, any deterministic protection scheme can be compressed to an $N\times 2$ matrix (or smaller) recording which $Y$-value each $X$-value maps to. In addition, deterministic schemes are resistant to averaging attacks, where the adversary attempts to learn additional information by gathering statistics of $Y$, since the same $X$ value always maps the the same $Y$ value. In the event that a mixture of two deterministic schemes is needed, one may implement a pre-determined schedule alternating between the two deterministic schemes for each $X\rightarrow Y$ mapping. Here, while the leakage of individual observations of $Y$ will change over time, we can enforce the desired long-run bound.

Second, the proof of the main theorem induces an algorithm by which one may take any optimizing protection scheme and convert it to a deterministic form, so that the discussed benefits can be leveraged. This algorithm simply performs the procedures specified in Definitions~\ref{waterFillingDef} and~\ref{QGen} in Appendix~\ref{appProofs} recursively.

Third, if it is necessary to solve the entire optimal trade-off curve (for example, if on-the-fly tuning of leakage is expected), Theorem~\ref{mainThm} states that it is only necessary to solve for integer exp-leak points and then connect the dots so that the overall curve is convex. Also, for small deviations in the leakage bound, tuning can be done simply by changing the mixture proportion.

\subsection{Comparing Optimal Protection Under Different Metrics}

\begin{figure}
	\centering
	\begin{subfigure}[t]{1.1\columnwidth}
		\includegraphics[width=\columnwidth]{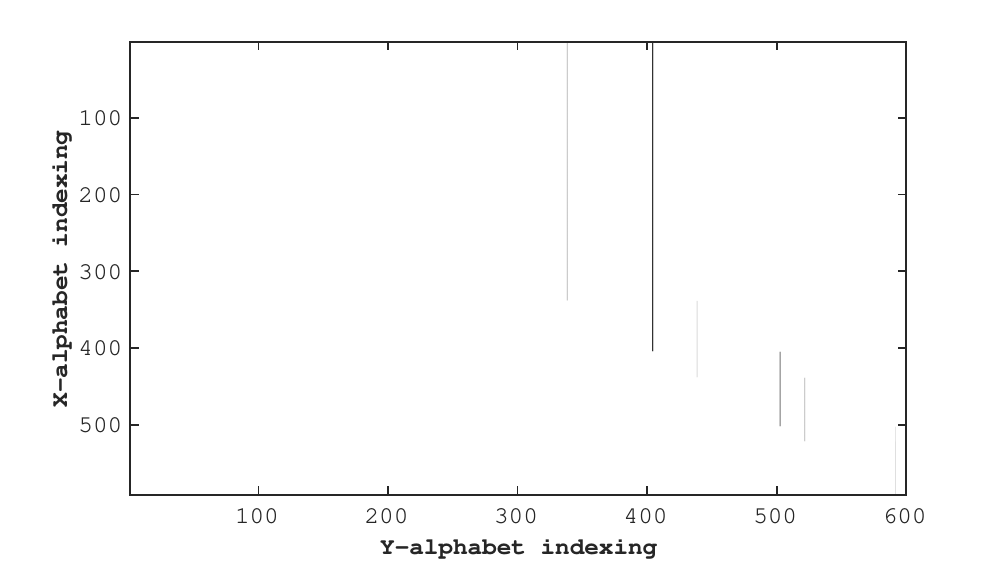}
		\caption{}
		\label{RSA_5percent}
	\end{subfigure}
	\begin{subfigure}[t]{1.1\columnwidth}
		\includegraphics[width=\columnwidth]{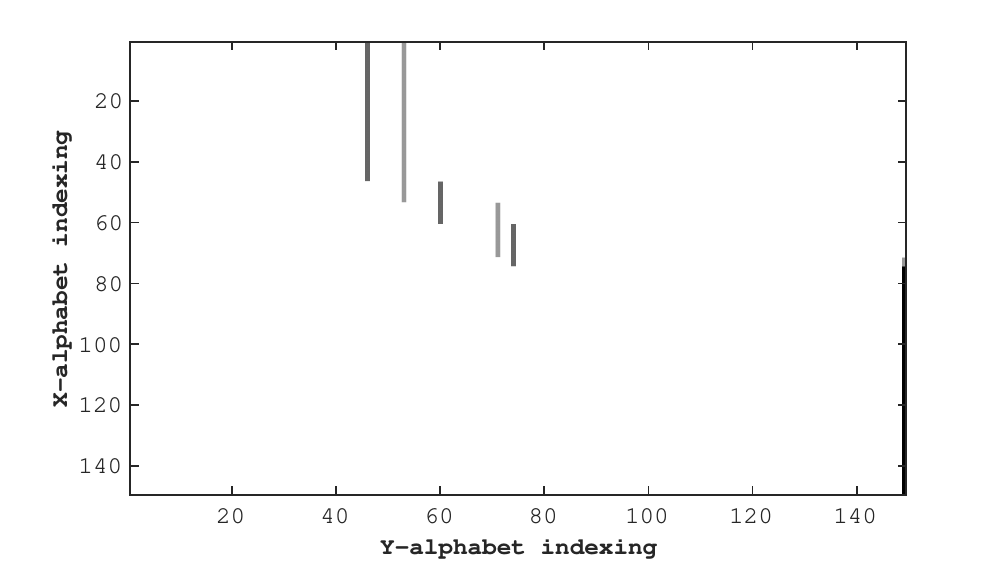}
		\caption{}
		\label{VoIP_20percent}
	\end{subfigure}
	\caption{(a) ML-optimal solution for 5\% delay overhead on RSA decryption times. ML=1.6862 bits. Note ML equals mult-leakage(b) ML-optimal solution for 20\% padding overhead on VoIP packet sizes. ML=1.8111 bits.}
\end{figure}

Finally, we will compare maximal leakage optimal schemes (\emph{ML-optimal} schemes) to mutual information and channel capacity optimal schemes (\emph{MI-optimal} and \emph{CC-optimal} schemes). First, we will simply show sample ML-optimal schemes for the RSA decryption side channel (the GMP implementation) and for a packet size side channel based on VoIP applications. Then, using these sample schemes, we will discuss the qualitative nature of MI-optimal and CC-optimal schemes.\newline

\noindent\textbf{ML-optimal protection for RSA decryption: } Using the GMP-based decryption timing data from Section 4, we use Gurobi (a convex optimization solver) to solve the inverted optimization (i.e. minimize leakage subject to a cost bound) from Equation \label{optProbML} for a total cost bound of 5\% delay overhead. Doing so achieves maximal leakage of 1.6862 bits. Note that in this type of side channel, one relevant class of deterministic protection schemes is emph{thresholding} where, for an ascending sequence of thresholds, all $X$ values less than or equal to the smallest threshold are mapped to the first threshold, all $X$ values greater than the first and less than or equal to the second threshold are mapped to the second, and so on. As it turns out, the resulting protection scheme from this experiment, is shown in Figure \ref{RSA_5percent}, and can be described as the combination of the thresholding schemes $\textbf{P}_1$ with thresholds $x_{338}, x_{438}, x_{521}, x_{591}$ and $\textbf{P}_2$ with thresholds $x_{404}, x_{502}, x_{591}$. This particular scheme can be implemented by a schedule that uses $\textbf{P}_1$ 21.8\% of the time and $\textbf{P}_2$ 78.2\% of the time.\newline

\noindent\textbf{ML-optimal protection for VoIP: } Here we present the packet size channel for speech coding in the context of Voice-over-IP (VoIP) applications and then perform an analogous experiment to assess the ML-optimal scheme. Previous work demonstrated that there exists a side channel leak through the sizes of packets sent over networks\cite{wright_spot_2008}. It has even been shown that such side channels allow packet sniffers to partially recover or reconstruct spoken phrases \cite{doychev_yes_2009,wright_uncovering_2010}.

The victim system is a typical VoIP application, which operates by encoding fixed-length time intervals (called a ``frame'') of sound waveforms into one packet per time interval. In particular, VoIP system designers favor a form of variable bitrate (VBR) compression, which reduces bandwidth usage and improves recovered speech quality. 

We assume that the adversary is interested in reconstructing the transcript, or the text of what was spoken. The adversary observes the final payload size of each packet, $Y$. $X$ is the un-padded packet size produced by the speech codec (a coder-decoder used to compress and decompress human speech). Our protection scheme maps $X$ to $Y$ by padding each packet independently.

For our experiments, we used Mozilla's CommonVoice\footnote{https://voice.mozilla.org/en} English dataset. For the speech codec, we used Opus, an efficient open-source codec endorsed by the IETF, set to 24 kbps VBR with a frame size of 20 ms. Under these settings, Opus encodes each frame to one of 151 different packet sizes. We encoded approximately 572 hours worth of human speech to obtain the distribution $p(x)$. 

Then, we set the cost matrix to be the number of bytes of padding for each packet ($c(x,y)=y-x$ if $y>=x$ and $c(x,y)=\infty$ otherwise), and again computed the inverted optimization using Gurobi. Solving the inverted optimization for a 20\% padding overhead gives the solution seen in Figure~\ref{VoIP_20percent}, and it can be decomposed into two deterministic protection schemes in a similar fashion to the RSA optimization. \newline

\noindent\textbf{MI and CC optimal protections: } Now, we will show that deterministic schemes typically will not be either MI-optimal or CC-optimal, except in some edge cases such as the zero-leakage case. This observation will further lead us to conclude that, especially for side channels with shattering $U$, MI and CC optimal schemes will commonly result in suboptimal (in terms of the optimization over mult-leakage) protection.

First, recall the definition of mutual information as given in Equation~\ref{MIDef}. Since $p(x)$ is fixed for the optimization over mutual information, the only active variable in mutual information is $p(y|x)$. Since mutual information is nonlinear and convex over $p(y|x)$, we can expect that it will be optimized by protection schemes in the interior of the feasible set. In other words, we expect that MI-optimal schemes will tend to be stochastic. Indeed, a simple experiment confirms this. Using the the alphabets $\mathcal{X}=\{x_1, x_2, x_3, x_4\}$ and $\mathcal{Y}=\{y_1, y_2, y_3, y_4\}$, the marginal distribution of $X$, $p(x)=[0.4, 0.2, 0.2, 0.2]$, and cost function

\[C=\begin{bmatrix}
	1\ & 2\ & 3\ & 4\\
	\infty\ & 1\ & 2\ & 3\\
	\infty\ & \infty\ & 1\ & 2\\
	\infty\ & \infty\ & \infty\ & 1\\
\end{bmatrix}\]

\noindent the ML-optimal and MI-optimal solutions for 0.5 units of cost are given by:
\[
	P^*_{ML}=
	\begin{bmatrix}
		.25\ & .75\ & 0\ & 0\\
		0\ & 1\ & 0\ & 0\\
		0\ & 0\ & 0\ & 1\\
		0\ & 0\ & 0\ & 1\\
	\end{bmatrix}\quad
	P^*_{MI}=
	\begin{bmatrix}
		.5235\ & .3031\ & 0.1233\ & 0.0502\\
		0\ & .4890\ & .3120\ & .1990\\
		0\ & 0\ & .6105\ & .3895\\
		0\ & 0\ & 0\ & 1\\
	\end{bmatrix}
\]

Finally, since channel capacity is itself defined as a maximization over mutual information, one should expect the same kind of behavior when optimizing over channel capacity.

\section{A Heuristic Algorithm} 
In this section, we will address the dimensionality of the LP. For alphabet $|\mathcal{X}|=N$ and $|\mathcal{Y}|=M$, the constrained optimization in Equation~\ref{optProbML} is over an $N\times M$ variable matrix. So, the alphabet sizes of $X$ and $Y$ are intimately linked to the dimensionality of the LP and can greatly affect computational complexity. It may be possible to reduce the problem size by grouping symbols in $X$ or in $Y$ together, thereby reducing $N$ and $M$. However, doing so incurs additional cost by some hard-to-measure quantity and is not always practical. For such cases, we present a heuristic algorithm that can be used to approximate the full trade-off curve.

\subsection{Greedy Algorithm}
\begin{definition}
	For any nonempty set $\mathcal{S}\subseteq\mathcal{Y}$ and cost matrix $\{c(x,y)\}$, we define a deterministic protection scheme $\textbf{P}_\mathcal{S} = \{p_{xy}\}$ such that:
	\begin{equation}
		p_{xy}= \begin{cases}
			1 \quad \text{if } y= \min{\underset{y^\prime\in\mathcal{S}}{\arg\min{}}{c(x,y^\prime)}}\\
			0 \quad \text{otherwise}\\
		\end{cases}
	\end{equation}
	We refer to $\textbf{P}_\mathcal{S}$ as the deterministic protection scheme \emph{induced by} the subset $\mathcal{S}$.
\end{definition}

\begin{definition}
	For any non-empty set $\mathcal{S}\subseteq\mathcal{Y}$, let:
	\begin{equation}
		\mathscr{L}(\mathcal{S})=\mathscr{L}(\textbf{P}_\mathcal{S}) \ and\  \mathscr{C}(\mathcal{S})=\mathscr{C}(\textbf{P}_\mathcal{S})
	\end{equation}
\end{definition}

\begin{definition}
	For a given staircase nondecreasing cost matrix $\{c(x,y)\}$, we identify one (not necessarily unique) $y_0\in\mathcal{Y}$ such that:
	\begin{equation}
		y_0 = \underset{y\in\mathcal{Y}}{\arg\min{}}{\mathscr{C}(\{y\})}
	\end{equation}
	Define the subset $\mathcal{Y}^\prime = \mathcal{Y}-\{y_0\}$.
\end{definition}

\begin{definition} \label{setFunc}
	For any set $\mathcal{A}\subseteq\mathcal{Y}^\prime$, we define the set function:
	\begin{equation}
		f(\mathcal{A})=-\mathscr{C}(\mathcal{A}\cup\{y_0\})
	\end{equation}
\end{definition}

\begin{definition}
	Here, we define a greedy algorithm to construct a sequence of deterministic protection schemes as follows:
	\begin{enumerate}
		\item Start with $\mathcal{A}=\{\emptyset\}$.
		\item Choose $y\in\mathcal{Y}^\prime-\mathcal{A}$ such that $f(\mathcal{A}\cup\{y\})$ is maximized over all such choices of $y$. If $\mathcal{Y}^\prime-\mathcal{A}$ is empty or if there does not exist such $y$ that $f(\mathcal{A}\cup\{y\})>f(\mathcal{A})$, terminate this algorithm.
		\item Set $\mathcal{A}=\mathcal{A}\cup \{y\}$.
		\item Go to step 2.
	\end{enumerate}
\end{definition}

\subsection{Bounded Sub-optimality of the Greedy Algorithm} 
Using standard results in combinatorial optimization \cite{Nemhauser1978}, we can obtain bounds on how suboptimal the solutions obtained from the greedy algorithm are. We will first prove some basic facts about the set function $f(\mathcal{A})$ given in Definition~\ref{setFunc}.
\begin{lemma} \label{submod}
	$f(\mathcal{A})$ is submodular.
\end{lemma}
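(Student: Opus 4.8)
The plan is to reduce the claim to an elementary property of the running minimum. First I would unwind the definitions: since $\textbf{P}_\mathcal{S}$ sends each $x$ to some $y$ attaining $\min_{y'\in\mathcal{S}}c(x,y')$ (the tie-breaking rule is irrelevant to the value), the total cost of that scheme is
\[
\mathscr{C}(\mathcal{S}) = \mathscr{C}(\textbf{P}_\mathcal{S}) = \sum_{x\in\mathcal{X}} p(x)\,\min_{y\in\mathcal{S}} c(x,y),
\]
so that $f(\mathcal{A}) = -\sum_{x} p(x)\,\min_{y\in\mathcal{A}\cup\{y_0\}} c(x,y)$. I would first record that $f$ is real-valued on all of $2^{\mathcal{Y}'}$: because $y_0$ is chosen to minimize $\mathscr{C}(\{y\})$ and, whenever any finite-cost protection exists at all, some column has finite total cost, we have $\mathscr{C}(\{y_0\})<\infty$, hence $c(x,y_0)<\infty$ for every $x$ in the support of $p$; since $y_0$ belongs to every set $\mathcal{A}\cup\{y_0\}$, each inner minimum (over the support) is finite, and the $p(x)=0$ terms vanish.

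The core step is to compute the marginal increment of $f$ and show it is antitone in the set. Fix $\mathcal{A}\subseteq\mathcal{B}\subseteq\mathcal{Y}'$ and $y\in\mathcal{Y}'\setminus\mathcal{B}$, and for each $x$ write $m^x_A = \min_{z\in\mathcal{A}\cup\{y_0\}} c(x,z)$ and $m^x_B = \min_{z\in\mathcal{B}\cup\{y_0\}} c(x,z)$; both are finite and $m^x_B \le m^x_A$ because enlarging the index set can only lower the minimum. Adjoining $y$ changes the $x$-th inner minimum from $m^x_A$ to $\min(c(x,y), m^x_A)$, so
\[
f(\mathcal{A}) - f(\mathcal{A}\cup\{y\}) = \sum_{x} p(x)\,\bigl[\,\min(c(x,y), m^x_A) - m^x_A\,\bigr] = \sum_{x} p(x)\,\min\!\bigl(c(x,y) - m^x_A,\; 0\bigr),
\]
and the same identity holds with $B$ in place of $A$. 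Since $m^x_B \le m^x_A$ gives $c(x,y) - m^x_B \ge c(x,y) - m^x_A$, and $t\mapsto\min(t,0)$ is nondecreasing, we get $\min(c(x,y)-m^x_B,0) \ge \min(c(x,y)-m^x_A,0)$ termwise; multiplying by $p(x)\ge0$ and summing yields $f(\mathcal{B}) - f(\mathcal{B}\cup\{y\}) \ge f(\mathcal{A}) - f(\mathcal{A}\cup\{y\})$, i.e.\ $f(\mathcal{A}\cup\{y\}) - f(\mathcal{A}) \ge f(\mathcal{B}\cup\{y\}) - f(\mathcal{B})$. This is precisely the diminishing-returns characterization of submodularity, so $f$ is submodular. (As a byproduct, each increment $f(\mathcal{A}\cup\{y\}) - f(\mathcal{A}) = -\sum_x p(x)\min(c(x,y)-m^x_A,0) \ge 0$, so $f$ is also monotone nondecreasing, which is what the greedy analysis of \cite{Nemhauser1978} will need.)

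I do not expect a genuine obstacle here; the only delicate point is bookkeeping around the possibly infinite entries of the cost matrix. Keeping $y_0$ inside every set under consideration, and using $\mathscr{C}(\{y_0\})<\infty$, is exactly what makes every minimum finite and every difference of minima well defined, so the termwise inequality above is meaningful rather than an $\infty-\infty$ expression. I would also remark that the staircase-nondecreasing hypothesis is not actually required for submodularity itself — the argument works for an arbitrary nonnegative cost matrix — it is only invoked to ensure $y_0$ has finite cost and that the induced deterministic schemes are well behaved in the surrounding development.
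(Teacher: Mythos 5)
Your proof is correct, and it rests on the same underlying fact as the paper's: writing $f(\mathcal{A})=-\sum_x p(x)\min_{y\in\mathcal{A}\cup\{y_0\}}c(x,y)$ and checking submodularity termwise in $x$. The difference is which equivalent characterization you verify. The paper checks the four-point inequality $f(\mathcal{A}\cup\{b\})+f(\mathcal{A}\cup\{c\})\geq f(\mathcal{A}\cup\{b,c\})+f(\mathcal{A})$ by observing that $\min_{y\in\mathcal{A}\cup\{b,c,y_0\}}c(x,y)$ must coincide with one of the two intermediate minima; you check the diminishing-returns form over nested sets $\mathcal{A}\subseteq\mathcal{B}$ via the identity $\min(c(x,y),m)-m=\min(c(x,y)-m,0)$ and the monotonicity of $t\mapsto\min(t,0)$. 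Your route buys two things the paper's proof omits: (i) explicit bookkeeping that every inner minimum is finite (via $\mathscr{C}(\{y_0\})<\infty$), so no $\infty-\infty$ arises --- though note your remark that the staircase hypothesis is dispensable is slightly at odds with your own finiteness argument, since the claim that a finite-cost protection forces some single column to have finite total cost uses the staircase structure of the infinities; and (ii) the observation that $f$ is monotone nondecreasing, which the paper never proves but implicitly needs, since the Nemhauser--Wolsey--Fisher bound it invokes is stated for monotone submodular functions. Both are worthwhile additions rather than gaps in your argument.
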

\begin{proof}
	For $\mathcal{A},\mathcal{B}\subseteq\mathcal{Y}^\prime$ such that $\mathcal{A}\cap\mathcal{B}=\{\emptyset\}$,
	\begin{align*}
		f(\mathcal{A}\cup\mathcal{B}) &= -\sum_{x\in\mathcal{X}}{\min_{y\in\mathcal{A}\cup\mathcal{B}\cup\{y_0\}}{p(x)c(x,y)}}\\
		&= -\sum_{x\in\mathcal{X}}{\min_{y\in\mathcal{A}\cup\{y_0\}}{p(x)c(x,y)}} + \sum_{x\in\mathcal{X}}{\min_{y\in\mathcal{A}\cup\{y_0\}}{p(x)c(x,y)}}\\
		&\phantom{====}- \sum_{x\in\mathcal{X}}{\min_{y\in\mathcal{A}\cup\mathcal{B}\cup\{y_0\}}{p(x)c(x,y)}}\\
		&= f(\mathcal{A}) + \sum_{x\in\mathcal{X}}{p(x)}[\min_{y\in\mathcal{A}\cup\{y_0\}}{c(x,y)}-\min_{y\in\mathcal{A}\cup\mathcal{B}\cup\{y_0\}}{c(x,y)}]\\
		&\equiv f(\mathcal{A}) + D(A,B)
	\end{align*}
	Then, for $\mathcal{A}\subseteq\mathcal{Y}^\prime$ and $b,c\in\mathcal{Y}^\prime\backslash\mathcal{A}$,
	\begin{align*}
		&f(\mathcal{A}\cup\{b\}) + f(\mathcal{A}\cup\{c\}) - f(\mathcal{A}\cup\{b,c\}) - f(\mathcal{A})\\
		&= D(\mathcal{A},\{b\}) + D(\mathcal{A},\{c\}) - D(\mathcal{A},\{b,c\})\\
		&= \sum_{x\in\mathcal{X}}{p(x)}[\min_{y\in\mathcal{A}\cup\{y_0\}}{c(x,y)} - \min_{y\in\mathcal{A}\cup\{b,y_0\}}{c(x,y)}\\
		&\phantom{====} -\min_{y\in\mathcal{A}\cup\{c,y_0\}}{c(x,y)}+\min_{y\in\mathcal{A}\cup\{b,c,y_0\}}{c(x,y)}]\\
		&\equiv \sum_{x\in\mathcal{X}}{p(x)}[C_1-C_2-C_3+C_4]\\
		&\geq 0
	\end{align*}
	since $C_4$ is equal to $C_2$ or $C_3$ (or both), and $C_1$ is no smaller than either $C_2$ or $C_3$. Hence, 
	\begin{equation}
		f(\mathcal{A}\cup\{b\}) + f(\mathcal{A}\cup\{c\}) \geq f(\mathcal{A}\cup\{b,c\}) + f(\mathcal{A})
	\end{equation}
	so $f(\mathcal{A})$ is submodular (\cite{schrijver-book}, Thm 44.1).

\end{proof}

\begin{definition}
	For integer exp-leak bound $L$, let $\mathcal{A}_{g}(L)$ be the set obtained by running the greedy algorithm unil $|\mathcal{A}\cup\{y_0\}|=L$ (for simplicity, assume the greedy algorithm does not terminate prior to this point). 
	
	For integer exp-leak bound $L$, let $\mathcal{A}^*(L)\subseteq\mathcal{Y}^\prime$ be the true optimal set such that $f(\mathcal{A})$ is maximized subject to $|\mathcal{A}\cup\{y_0\}|\leq L$.
\end{definition}

Now, since $f(\mathcal{A})$  is submodular, we can bound the greedy algorithm for all $L\geq 2$ as follows (\cite{Nemhauser1978}, Theorem 4.1):

\begin{equation} \label{greedyIneq}
	\begin{split}
		\frac{f(\mathcal{A}^*(L))-f(\mathcal{A}_g(L))}{f(\mathcal{A}^*(L))-f(\{\emptyset\})} &= \frac{\mathscr{C}(\mathcal{A}_g(L)\cup\{y_0\})-\mathscr{C}(\mathcal{A}^*(L)\cup\{y_0\})}{\mathscr{C}(\{y_0\})-\mathscr{C}(\mathcal{A}^*(L)\cup\{y_0\})}\\ 
		&\leq \Big(\frac{L-2}{L-1}\Big)^{L-1}\leq \frac{1}{e}
	\end{split}
\end{equation}

The greedy algorithm is capable of approximating a full cost-leakage trade-off curve more quickly, compared to running as many as $M$ individual LP optimizations. The difference in computation time increases with the size of $|\mathcal{Y}|$; the greedy algorithm runs on the order of 30 times faster than  the LP on the integer exp-leakage points for our larger experiments, but only on the order of 5 times faster for our smaller experiments.

Moreover, we have shown that the cost of deterministic protection schemes computed by the greedy algorithm is bounded relative to the true optimal protection schemes at the same leakage levels. Finally, a useful side-effect of this bound is that the true optimal scheme does not perform any better than the greedy algorithm after a single iteration (when $L=2$), which follows from Equation \ref{greedyIneq}. As there exist many applications that require close to no leakage and since a single step of the greedy algorithm (computing for $L=2$) merely consists of a $O(N)$ search over the space of $\mathcal{Y}^\prime$, these protection schemes with exp-leak between 1 and 2 can be easily computed since we know from Theorem~\ref{mainThm} that the optimal protection scheme is simply a convex combination of the two deterministic protection schemes.

\subsection{Sub-optimality of the Greedy Algorithm in Practice}
\begin{figure}
	\centering
	\begin{subfigure}[t]{0.45\columnwidth}
		\includegraphics[width=\columnwidth]{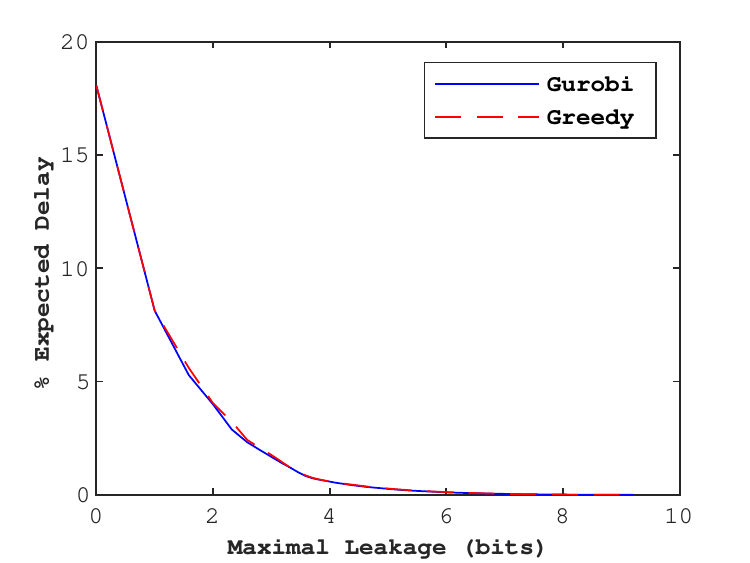}
		\caption{}
		\label{RSAOpt_plot}
	\end{subfigure}
	\begin{subfigure}[t]{0.45\columnwidth}
		\includegraphics[width=\columnwidth]{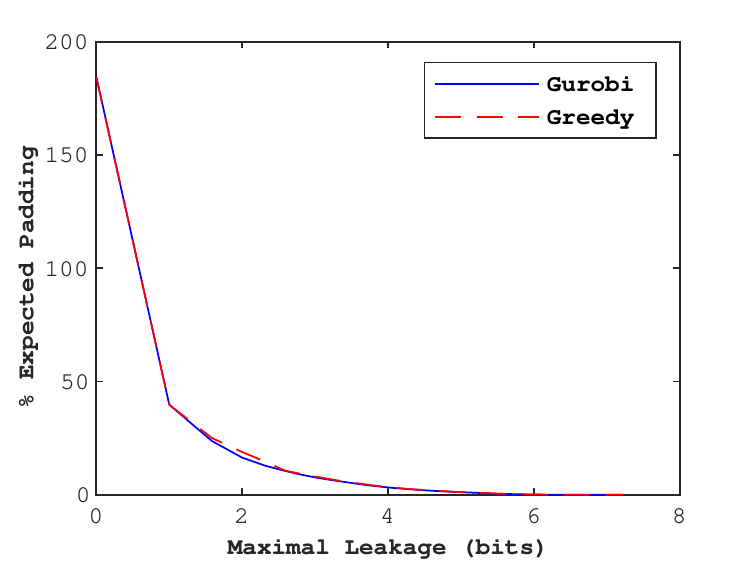}
		\caption{}
		\label{VoIPOpt_plot}
	\end{subfigure}
	\caption{(a) Optimal trade-off curve for 16-bit GMP RSA. The horizontal axis is the leakage bound in bits(the log of $L$ from Equation~\ref{optProbML}) and the vertical axis is the percent expected delay normalized over the baseline expected decryption time (5,246.3 cycles). (b) Analogous trade-off curve for VoIP packet sizes. Here, the vertical axis is the percent expected padding normalized over the baseline expected packet size (54.15 bytes).}
\end{figure}

Here, we reuse the RSA decryption timing data and VoIP packet size data to demonstrate that the gap between the true optimal curve and the greedy algorithm is in fact very small. Indeed, we find that for these case studies, the error of the greedy rate is far below the projected error given by Equation~\ref{greedyIneq}. Thanks to Theorem \ref{mainThm}, we obtain the true optimal curve by using Gurobi to optimize over maximal leakage at integer exp-leak points. We use the greedy algorithm to obtain an approximately optimal curve. These results can be seen in Figures~\ref{RSAOpt_plot} and \ref{VoIPOpt_plot}. 



\section{Related Work} 

There exist many previous studies that demonstrate side channel attacks in various systems and
applications. There also exists work on legitimizing various metrics other than maximal leakage for quantifying leakage in side channels \cite{braun_quantitative_2009,alvim_measuring_2012,smith_foundations_2009,alvim_additive_2014}. Previous work has also proposed many protection schemes against side channel 
attacks, often based on heuristics and without quantitative security guarantees.
Here, we discuss previous work on quantitative metrics for side channels, and studies on
optimal trade-offs between security and protection overhead. 
This paper represents the first to experimentally demonstrate the limitations of traditional
information theoretic metrics, the practical advantages of maximal leakage, and how the optimal 
protection trade-off can be efficiently obtained.

Previous work has attempted to quantify RSA timing channels using conditional entropy (or equivalently, mutual information) with the justification that it quantifies the amount of uncertainty of the adversary's guesses  \cite{kopf_information-theoretic_2007}. In the same vein, useful results on optimal trade-offs using conditional entropy have been developed for strictly deterministic side channels \cite{kopf_provably_2009}. However, in our work, we have made a case that such metrics not only underestimate the threat of the side channel, but past results from using conditional entropy only apply to deterministic side channels, a special case that is subsumed under our models given in Section 2.

Work on developing trade-offs for using maximum entropy has been applied to stochastic side channels \cite{askarov_predictive_2010,zhang_predictive_2011}. Unfortunately, due to the simplicity of maximum entropy (using our notation, maximum entropy is simply $\log{|\mathcal{Y}|}$), protection schemes from this work are restricted to deterministic discretization of the output. While discretization leads to easily implementable protections, they are typically suboptimal with respect to mult-leakage. Trade-offs have also been developed for stochastic side channels using mutual information \cite{mao_quantitative_2017}, but likely due to the nature of mutual information, useful theoretical results are difficult to prove. This previous work proposes discretization and randomization as two possible protection mechanisms. Again, discretization and randomization based on independent noise are suboptimal. Finally some previous work has used mutual information, channel capacity, and maximal leakage all together to provide bounds and trade-offs of each \cite{wang_secure_2017}, but did not try to optimize for them.

Finally, there exists some work in the domain of privacy-preserving publication that analyzes optimal trade-offs using maximal leakage \cite{liao_hypothesis_2017,liao_privacy_2018}. This work is similar in spirit to ours, but the underlying problems are fundamentally different. These results are not for side channels, but for privacy-preservation in publishing database entries. The types of total cost considered are probabilities of Type-II errors and hard distortion functions (which aims to provide strong performance guarantees). In databases, where secrets are typically a small number of bits at most, these results are extremely relevant, but they are hard to justify in most side channels. Moreover, the results attained in this previous work pertain to the use of a tunable form of maximal leakage \cite{liao_tunable_2018} that is largely irrelevant to side channels.

\bibliographystyle{splncs04}
\bibliography{paper}

\begin{thebibliography}{10}
\providecommand{\url}[1]{\texttt{#1}}
\providecommand{\urlprefix}{URL }
\providecommand{\doi}[1]{https://doi.org/#1}

\bibitem{alvim_measuring_2012}
Alvim, M.S., Chatzikokolakis, K., Palamidessi, C., Smith, G.: Measuring
  {Information} {Leakage} {Using} {Generalized} {Gain} {Functions}. In: 2012
  {IEEE} 25th {Computer} {Security} {Foundations} {Symposium}. pp. 265--279
  (Jun 2012). \doi{10.1109/CSF.2012.26}

\bibitem{alvim_additive_2014}
Alvim, M.S., Chatzikokolakis, K., Mciver, A., Morgan, C., Palamidessi, C.,
  Smith, G.: Additive and {Multiplicative} {Notions} of {Leakage}, and {Their}
  {Capacities}. In: Proceedings of the 2014 {IEEE} 27th {Computer} {Security}
  {Foundations} {Symposium}. pp. 308--322. {CSF} '14, IEEE Computer Society,
  Washington, DC, USA (2014). \doi{10.1109/CSF.2014.29},
  \url{https://doi.org/10.1109/CSF.2014.29}

\bibitem{askarov_predictive_2010}
Askarov, A., Zhang, D., Myers, A.C.: Predictive black-box mitigation of timing
  channels. In: Proceedings of the 17th {ACM} conference on {Computer} and
  communications security. pp. 297--307. ACM (2010),
  \url{http://dl.acm.org/citation.cfm?id=1866341}

\bibitem{braun_quantitative_2009}
Braun, C., Chatzikokolakis, K., Palamidessi, C.: Quantitative {Notions} of
  {Leakage} for {One}-try {Attacks}. Electronic Notes in Theoretical Computer
  Science  \textbf{249},  75--91 (Aug 2009). \doi{10.1016/j.entcs.2009.07.085},
  \url{https://linkinghub.elsevier.com/retrieve/pii/S1571066109003077}

\bibitem{doychev_yes_2009}
Doychev, G., Feld, D., Eckhardt, J., Neumann, S.: Yes {We} {Can}: {Uncovering}
  {Spoken} {Phrases} in {Encrypted} {VoIP} {Conversations} p.~30 (May 2009)

\bibitem{gong_quantifying_2016}
Gong, X., Kiyavash, N.: Quantifying the {Information} {Leakage} in {Timing}
  {Side} {Channels} in {Deterministic} {Work}-{Conserving} {Schedulers}.
  IEEE/ACM Transactions on Networking  \textbf{24}(3),  1841--1852 (Jun 2016).
  \doi{10.1109/TNET.2015.2438860},
  \url{http://ieeexplore.ieee.org/document/7128754/}

\bibitem{issa_maximal_2016}
Issa, I., Kamath, S., Wagner, A.B.: Maximal leakage minimization for the
  {Shannon} cipher system. In: 2016 {IEEE} {International} {Symposium} on
  {Information} {Theory} ({ISIT}). pp. 520--524 (Jul 2016).
  \doi{10.1109/ISIT.2016.7541353}

\bibitem{issa}
Issa, I., Kamath, S., Wagner, A.B.: An operational measure of information
  leakage. In: 2016 {Annual} {Conference} on {Information} {Science} and
  {Systems} ({CISS}). pp. 234--239 (Mar 2016). \doi{10.1109/CISS.2016.7460507}

\bibitem{issa_operational_2017}
Issa, I., Wagner, A.B.: Operational definitions for some common information
  leakage metrics. In: 2017 {IEEE} {International} {Symposium} on {Information}
  {Theory} ({ISIT}). pp. 769--773 (Jun 2017). \doi{10.1109/ISIT.2017.8006632}

\bibitem{Kocher}
Kocher, P.C.: Timing attacks on implementations of diffie-hellman, rsa, dss,
  and other systems. In: Proceedings of the 16th Annual International
  Cryptology Conference on Advances in Cryptology. pp. 104--113. CRYPTO '96,
  Springer-Verlag, London, UK, UK (1996),
  \url{http://dl.acm.org/citation.cfm?id=646761.706156}

\bibitem{kopf_information-theoretic_2007}
Köpf, B., Basin, D.: An information-theoretic model for adaptive side-channel
  attacks. In: Proceedings of the 14th {ACM} conference on {Computer} and
  communications security - {CCS} '07. p.~286. ACM Press, Alexandria, Virginia,
  USA (2007). \doi{10.1145/1315245.1315282},
  \url{http://portal.acm.org/citation.cfm?doid=1315245.1315282}

\bibitem{kopf_provably_2009}
Köpf, B., Dürmuth, M.: A {Provably} {Secure} {And} {Efficient}
  {Countermeasure} {Against} {Timing} {Attacks}. Tech. Rep.~089 (2009),
  \url{http://eprint.iacr.org/2009/089}

\bibitem{liao_privacy_2018}
Liao, J., Kosut, O., Sankar, L., Calmon, F.P.: Privacy {Under} {Hard}
  {Distortion} {Constraints}. In: 2018 {IEEE} {Information} {Theory} {Workshop}
  ({ITW}). pp.~1--5 (Nov 2018). \doi{10.1109/ITW.2018.8613385}

\bibitem{liao_tunable_2018}
Liao, J., Kosut, O., Sankar, L., Calmon, F.P.: A {Tunable} {Measure} for
  {Information} {Leakage}. In: 2018 {IEEE} {International} {Symposium} on
  {Information} {Theory} ({ISIT}). pp. 701--705 (Jun 2018).
  \doi{10.1109/ISIT.2018.8437307}

\bibitem{liao_hypothesis_2017}
Liao, J., Sankar, L., Calmon, F.P., Tan, V.Y.F.: Hypothesis testing under
  maximal leakage privacy constraints. In: 2017 {IEEE} {International}
  {Symposium} on {Information} {Theory} ({ISIT}). pp. 779--783 (Jun 2017).
  \doi{10.1109/ISIT.2017.8006634}

\bibitem{mao_quantitative_2017}
Mao, B., Hu, W., Althoff, A., Matai, J., Tai, Y., Mu, D., Sherwood, T.,
  Kastner, R.: Quantitative {Analysis} of {Timing} {Channel} {Security} in
  {Cryptographic} {Hardware} {Design}. IEEE Transactions on Computer-Aided
  Design of Integrated Circuits and Systems  \textbf{PP}(99), ~1--1 (2017).
  \doi{10.1109/TCAD.2017.2768420}

\bibitem{millen_covert_1987}
Millen, J.K.: Covert {Channel} {Capacity}. In: 1987 {IEEE} {Symposium} on
  {Security} and {Privacy}. pp. 60--60 (Apr 1987). \doi{10.1109/SP.1987.10013}

\bibitem{Nemhauser1978}
Nemhauser, G.L., Wolsey, L.A., Fisher, M.L.: An analysis of approximations for
  maximizing submodular set functions---i. Mathematical Programming
  \textbf{14}(1),  265--294 (Dec 1978). \doi{10.1007/BF01588971},
  \url{https://doi.org/10.1007/BF01588971}

\bibitem{Rockafellar}
Rockafellar, R.T.: Convex analysis. Princeton Mathematical Series, Princeton
  University Press, Princeton, N. J. (1970)

\bibitem{schrijver-book}
Schrijver, A.: Combinatorial Optimization - Polyhedra and Efficiency. Springer
  (2003)

\bibitem{smith_foundations_2009}
Smith, G.: On the {Foundations} of {Quantitative} {Information} {Flow}. In:
  {FoSSaCS}. vol.~5504, pp. 288--302. Springer (2009),
  \url{http://link.springer.com/content/pdf/10.1007/978-3-642-00596-1.pdf\#page=301}

\bibitem{wagner_technical_2018}
Wagner, I., Eckhoff, D.: Technical {Privacy} {Metrics}: a {Systematic}
  {Survey}. ACM Computing Surveys  \textbf{51}(3),  1--38 (Jun 2018).
  \doi{10.1145/3168389}, \url{http://arxiv.org/abs/1512.00327}, arXiv:
  1512.00327

\bibitem{wang_secure_2017}
Wang, Y., Wu, B., Suh, G.E.: Secure {Dynamic} {Memory} {Scheduling} {Against}
  {Timing} {Channel} {Attacks}. In: 2017 {IEEE} {International} {Symposium} on
  {High} {Performance} {Computer} {Architecture} ({HPCA}). pp. 301--312 (Feb
  2017). \doi{10.1109/HPCA.2017.27}

\bibitem{wright_spot_2008}
Wright, C.V., Ballard, L., Coull, S.E., Monrose, F., Masson, G.M.: Spot {Me} if
  {You} {Can}: {Uncovering} {Spoken} {Phrases} in {Encrypted} {VoIP}
  {Conversations}. In: 2008 {IEEE} {Symposium} on {Security} and {Privacy} (sp
  2008). pp. 35--49 (May 2008). \doi{10.1109/SP.2008.21}

\bibitem{wright_uncovering_2010}
Wright, C.V., Ballard, L., Coull, S.E., Monrose, F., Masson, G.M.: Uncovering
  {Spoken} {Phrases} in {Encrypted} {Voice} over {IP} {Conversations}. ACM
  Transactions on Information and System Security  \textbf{13}(4),  1--30 (Dec
  2010). \doi{10.1145/1880022.1880029},
  \url{http://portal.acm.org/citation.cfm?doid=1880022.1880029}

\bibitem{yan_study_2015}
Yan, L., Guo, Y., Chen, X., Mei, H.: A study on power side channels on mobile
  devices. pp. 30--38. ACM Press (2015). \doi{10.1145/2875913.2875934},
  \url{http://dl.acm.org/citation.cfm?doid=2875913.2875934}

\bibitem{zhang_predictive_2011}
Zhang, D., Askarov, A., Myers, A.C.: Predictive mitigation of timing channels
  in interactive systems. In: Proceedings of the 18th {ACM} conference on
  {Computer} and communications security. pp. 563--574. ACM (2011),
  \url{http://dl.acm.org/citation.cfm?id=2046772}

\bibitem{zhou_camouflage:_2017}
Zhou, Y., Wagh, S., Mittal, P., Wentzlaff, D.: Camouflage: {Memory} {Traffic}
  {Shaping} to {Mitigate} {Timing} {Attacks}. In: 2017 {IEEE} {International}
  {Symposium} on {High} {Performance} {Computer} {Architecture} ({HPCA}). pp.
  337--348 (Feb 2017). \doi{10.1109/HPCA.2017.36}

\end{thebibliography}

\appendix
\section{Proofs} \label{appProofs}
%
%
%
%

\subsection{Proof of Theorem \ref{mainThm}.1} \label{appLem}
First, we establish a structural result for all optimal protection schemes that will be a useful assumption for later steps of our proof of Theorem \ref{mainThm}.1
\begin{definition} \label{waterFillingDef}
\textbf{(Water-Filling)} 

Consider any protection scheme \textbf{P}. Define a $1\times N$ vector $\vec{p}= [p_1, p_2,...p_N]$ such that $p_i = \max_{x\in\mathcal{X}}{p_{xy_i}}$. In plain terms, $\vec{p}$ consists of the column maxima of \textbf{P}.
Using $\vec{p}$ alone, we construct a new protection scheme \textbf{P'} as follows:
	\begin{enumerate}
		\item Start with a $M\times N$ zero matrix $\textbf{P'}=\{p_{xy}^\prime\}$. We will assume that the members of $\mathcal{X}$ and $\mathcal{Y}$ are in some enumerated order, as previously stipulated.
		\item For each row $x\in\mathcal{X}$, iterate over each $y_i,\ i = 1,2,...N$.
			\begin{itemize}
				\item If $c(x,y_i)=\infty$, let $p_{xy_i}^\prime = 0$
				\item Else, set $p_{xy_i}^\prime = \min{\{p_i,1-\sum_{j=1}^{i-1}{p_{xy_j}^\prime}\}}$.
			\end{itemize}
	\end{enumerate}
In plain terms, we are constructing \textbf{P'} by maintaining the column maxima of \textbf{P} and "filling" in probability mass in each row from left to right.

We define this procedure to generate \textbf{P'} from \textbf{P} as the method to convert \textbf{P} into "water-filled" form. Also, if \textbf{P} and \textbf{P'} are identical, we say that \textbf{P} is a "water-filled" protection scheme.
\end{definition}

\begin{lemma} \label{waterFillingLemma}
\textbf{(Water-Filling Lemma)} 

If the cost function satisfies definition \ref{stairNondec}, then all optimizing \textbf{P} can be converted into water-filled form \textbf{P'} such that $\mathscr{C}(\textbf{P})=\mathscr{C}(\textbf{P'})$ and $\mathscr{L}(\textbf{P})=\mathscr{L}(\textbf{P'})$.
\end{lemma}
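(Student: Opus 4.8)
The plan is to show that the water-filling procedure of Definition~\ref{waterFillingDef} preserves both the total cost and the exp-leak of any optimizing scheme $\textbf{P}$. Exp-leak preservation is essentially immediate: by construction, $\textbf{P}'$ is built so that every column maximum of $\textbf{P}'$ equals the corresponding entry $p_i$ of $\vec{p}$, which is the column maximum of $\textbf{P}$. Two small things must be checked here. First, that $\textbf{P}'$ is actually a valid protection scheme, i.e. each row sums to $1$; this needs the fact that for an \emph{optimizing} $\textbf{P}$ with a staircase nondecreasing cost matrix, there is enough ``room'' (enough finite-cost columns with large enough $p_i$) to fill each row to $1$ — this is where optimality of $\textbf{P}$ enters, since for a general $\textbf{P}$ the column maxima could be too small to refill a row. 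Second, that no column maximum of $\textbf{P}'$ ever \emph{exceeds} $p_i$: since each entry is set to $\min\{p_i, 1 - \sum_{j<i} p'_{xy_j}\} \le p_i$, this is clear, and the max is attained because the row that achieved $p_i = \max_x p_{xy_i}$ in $\textbf{P}$ can be shown to still receive the full $p_i$ in $\textbf{P}'$ (again using the staircase structure and optimality to argue that row isn't ``saturated'' before reaching column $i$). So $\mathscr{L}(\textbf{P}') = \sum_i p_i = \mathscr{L}(\textbf{P})$.

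For the cost, the key observation is that within each fixed row $x$, the cost entries $c(x,y_1), c(x,y_2), \dots$ are nondecreasing in the column index (restricting to finite entries, by Definition~\ref{stairNondec}.2), and the infinite entries form a ``prefix'' pattern governed by Definition~\ref{stairNondec}.1. Water-filling pushes probability mass as far to the left (toward cheaper columns) as possible, subject only to the cap $p_i$ on each column. I would argue that, for a fixed collection of column caps $\{p_i\}$ and a fixed required row sum of $1$, the greedy left-to-right fill \emph{minimizes} the row cost $\sum_i c(x,y_i) p'_{xy_i}$ among all feasible fillings of that row — a standard exchange argument: if a feasible row vector puts mass on a more expensive column while a cheaper legal column is not yet at its cap, moving mass leftward does not increase cost. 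Hence $\mathscr{C}(\textbf{P}') \le \mathscr{C}(\textbf{P})$ row by row (since $\textbf{P}$'s own rows are feasible fillings for the same caps). Combined with $\mathscr{L}(\textbf{P}') = \mathscr{L}(\textbf{P})$ and the fact that $\textbf{P}$ is optimizing in $S$, optimality forces $\mathscr{C}(\textbf{P}') = \mathscr{C}(\textbf{P})$: if the cost strictly dropped, $(\mathscr{L}(\textbf{P}), \mathscr{C}(\textbf{P}'))$ would beat the boundary value $C^*(\mathscr{L}(\textbf{P}))$, a contradiction.

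The main obstacle I anticipate is the well-definedness claim: proving that water-filling always produces a legal transition matrix (rows summing to exactly $1$) when $\textbf{P}$ is optimizing. One has to rule out the possibility that, after capping, the finite-cost columns available to some row $x$ cannot collectively supply a total mass of $1$. The cleanest route is probably to show that an optimizing $\textbf{P}$ can be assumed WLOG to already have a certain monotonicity in its columns (or to argue directly that $\sum_{i : c(x,y_i) < \infty} p_i \ge 1$ for every $x$, using that $\textbf{P}$ itself routes all of row $x$'s unit mass through those same finite-cost columns and $p_i \ge p_{xy_i}$). I would also need to confirm the edge case where a row saturates exactly — then the remaining entries are forced to $0$ and the infinite-cost columns are harmlessly skipped by the first bullet of the fill rule. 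Everything else — the exchange argument for cost and the column-max bookkeeping for leakage — is routine once the staircase hypotheses are invoked correctly.
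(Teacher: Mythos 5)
Your overall architecture matches the paper's: water-fill, show the procedure cannot hurt cost, and invoke optimality of $\textbf{P}$ to upgrade inequalities to equalities. Your exchange argument for $\mathscr{C}(\textbf{P}')\leq\mathscr{C}(\textbf{P})$ and your check that the rows of $\textbf{P}'$ actually sum to $1$ (via $\sum_{i:c(x,y_i)<\infty}p_i\geq\sum_{i:c(x,y_i)<\infty}p_{xy_i}=1$) are both sound — the latter is in fact more careful than the paper, which does not address well-definedness at all. (One small misstatement: that well-definedness argument needs only finiteness of $\mathscr{C}(\textbf{P})$, which is part of the definition of a protection scheme, not optimality.)

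The genuine gap is your claim that exp-leak is preserved ``by construction'' because every column maximum of $\textbf{P}'$ equals $p_i$. The construction only guarantees column maxima are \emph{at most} $p_i$; the row that attained $p_i$ in $\textbf{P}$ can be saturated before reaching column $i$ and fail to attain it in $\textbf{P}'$. Concretely, for $\textbf{P}=\left(\begin{smallmatrix}0.5 & 0.5\\ 0 & 1\end{smallmatrix}\right)$ water-filling yields $\textbf{P}'=\left(\begin{smallmatrix}0.5 & 0.5\\ 0.5 & 0.5\end{smallmatrix}\right)$, so $\mathscr{L}$ drops from $1.5$ to $1$ and the second column's maximum is no longer $p_2=1$. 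This $\textbf{P}$ is not optimizing, but your parenthetical appeal to ``staircase structure and optimality'' to rule out saturation for optimizing $\textbf{P}$ is not an argument, and I see no direct structural proof. The paper sidesteps this entirely: it establishes only the two inequalities $\mathscr{L}(\textbf{P}')\leq\mathscr{L}(\textbf{P})$ and $\mathscr{C}(\textbf{P}')\leq\mathscr{C}(\textbf{P})$, and then uses the fact that $\textbf{P}$ is optimizing (it minimizes $\mathscr{C}+\alpha\mathscr{L}$, so $\textbf{P}'$ cannot do strictly better in either coordinate without doing worse in the other) to force both equalities simultaneously; exact preservation of the column maxima then follows a posteriori from $\sum_i\max_x p'_{xy_i}=\sum_i p_i$ with each term bounded by $p_i$. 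If you replace your ``equality by construction'' step with this two-inequality-plus-optimality argument, your proof closes and coincides with the paper's.
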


\begin{proof}[Lemma \ref{waterFillingLemma}]
Suppose we are given optimizing \textbf{P} and its water-filled form \textbf{P'}. By its construction, $\mathscr{L}(\textbf{P})\geq\mathscr{L}(\textbf{P'})$ since we did not increase the total sum of column maxima. In addition, since we independently fill up each row's entries in \textbf{P'} from least cost to greatest cost, $\mathscr{C}(\textbf{P})\geq\mathscr{C}(\textbf{P'})$ for any cost function that is staircase nondecreasing.

From these two statements, we also obtain the reverse inequalities:
	\begin{itemize}
		\item $\mathscr{C}(\textbf{P})\geq\mathscr{C}(\textbf{P'})$ implies that $\mathscr{L}(\textbf{P})\leq\mathscr{L}(\textbf{P'})$ since \textbf{P} is optimizing and \textbf{P'} cannot perform any better (have lower $\mathscr{C}+\alpha\mathscr{L}$) than \textbf{P}.
		\item Similarly, $\mathscr{L}(\textbf{P})\geq\mathscr{L}(\textbf{P'})$ implies that $\mathscr{C}(\textbf{P})\leq\mathscr{C}(\textbf{P'})$, since \textbf{P} is optimizing.
	\end{itemize}

Therefore, $\mathscr{C}(\textbf{P})=\mathscr{C}(\textbf{P'})$ and $\mathscr{L}(\textbf{P})=\mathscr{L}(\textbf{P'})$.
\end{proof}

\begin{remark} \label{pPrimeArg}
\textbf{(Proof Approach for Theorem \ref{mainThm}.1)}
For any optimizing \textbf{P}, we start by assuming it is already in water-filled form, since we have already shown that doing so does not unnecessarily restrict our space of optimizing solutions. Then, we would like to show that there exists a special choice of \textbf{Q} such that  $\mathscr{C}(\textbf{P}+\delta\textbf{Q}) + \alpha \mathscr{L}(\textbf{P}+\delta\textbf{Q})$: 
	\begin{enumerate}
		\item is linear over some well-defined interval of $\delta$ values around 0.
		\item does not change with $\delta$ for any fixed $\alpha$
		\item results in protection scheme $\textbf{P}+\delta\textbf{Q}$ being strictly "more deterministic" (to be defined shortly) than \textbf{P} for a particular choice of $\delta$.
	\end{enumerate}
\end{remark}

\begin{definition}
\textbf{(Types of Matrix Entries)}

For the sake of discourse, we will define the following classifications of matrix entries in any protection scheme:
	\begin{itemize}
		\item An entry is \emph{fractional} if it is not equal to 0 or 1, and \emph{integral} otherwise. Similarly, a column is fractional if its maximum entry is fractional and integral otherwise.
		\item An entry is \emph{maxed out} if it is equal to the maximum value in its column, and \emph{hanging} otherwise.
	\end{itemize}
\end{definition}

\begin{remark} \label{WF-prop}
It is true by construction that a water-filled protection scheme will have at most one hanging mass entry and at least one maxed out entry in each row. Moreover, if a row has a hanging mass entry, there do not exist other non-zero entries further to the right of that entry.
\end{remark}

\begin{remark}
\textbf{(Measure of Randomness)}

In order to compare which protection scheme, between two options, is "more deterministic", we rely on the following metric for randomness of a protection scheme:

$R(\textbf{P}) = $ (\# fractional columns in \textbf{P}) + (\# hanging entries in \textbf{P})

Note that $R(\textbf{P})=0$ if and only if \textbf{P} is a deterministic protection scheme.
\end{remark}

We will now propose a particular choice of \textbf{Q} and $\delta$, and prove the desired properties about these choices after.

\begin{definition} \label{QGen}
\textbf{(Q-Generation Procedure)}

Given the water-filled protection scheme \textbf{P} with at least one fractional entry, we now define a procedure to generate a \textbf{Q} matrix. Note that any such protection scheme must also have at least one fractional column or else it would contradict the water-filled property.
	\begin{enumerate}
		\item Start with an $M\times N$ zero matrix \textbf{Q} that we will populate with values.
		\item Denote the leftmost fractional column index in \textbf{P} as $y$. Further denote the current "sign" to "+". 
		\item In the $y$th column of \textbf{Q}, if the sign is "+", assign the value $1$ to all entries in that column that are maxed out in \textbf{P}. If the sign is "-", assign the value $-1$ instead. 
		\item If the current sign is "+", change it to "-", and vice versa.
		\item Consider the set of rows that are maxed out in the $y$th column of \textbf{P}. \textit{Do all of these rows either have hanging mass in \textbf{P} or already have 2 non-zero entries in \textbf{Q}?} Depending on the answer:
			\begin{itemize}
				\item If yes, go to step 9.
				\item If no, then proceed to step 6.
			\end{itemize}
		\item Again consider the set of rows that are maxed out in the $y$th column of \textbf{P}. Choose the topmost row from this set that does not have hanging mass in \textbf{P} and has only 1 non-zero entry in \textbf{Q}. Denote the row index of that entry as $x$.
		\item Set $y$ to be the column index of the rightmost, maxed out entry of the $x$th row in the \textbf{P} matrix. Note that $y$ must correspond to a fractional column here.
		\item Go to step 3.
		\item If any rows in \textbf{Q} have hanging mass and an odd number of non-zero entries, assign either $1$ or $-1$, so that each of these rows sum to 0, to the hanging mass entries of these rows.
	\end{enumerate}
\end{definition}

\begin{lemma} \label{validQ}
\textbf{(Q-Generation Properties)} 
The procedure specified by definition \ref{QGen} satisfies the following:
	\begin{enumerate}
		\item The procedure terminates.
		\item All of the rows in the generated \textbf{Q} matrix sum to 0 (so that $\textbf{P}+\delta\textbf{Q}$ is a protection scheme).
		\item $\textbf{P}+\delta\textbf{Q}$ is a water-filled protection scheme
	\end{enumerate}
\end{lemma}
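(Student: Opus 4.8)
The plan is to prove the three properties in order, since properties~2 and~3 reuse one structural fact that I would establish first while setting up property~1. Call $\hat{y}$ the leftmost fractional column of $\textbf{P}$ (the column chosen at step~2). The sub-claim is that in any water-filled scheme, every row maxed out in $\hat{y}$ has zero mass in all columns strictly left of $\hat{y}$: those columns have integral maxima, and a water-filled row that deposits mass there is immediately ``full'' and therefore zero afterwards, contradicting its being maxed out in $\hat{y}$. Hence all rows maxed out in $\hat{y}$ leave $\hat{y}$ with the same partial sum and, being water-filled against the same remaining column maxima, fill identically from there on; in particular they share one rightmost maxed-out column, and either all have a hanging entry (in a common column) or none does.

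For property~1 (termination), I would first check that step~6 never fails when it is reached: a negative answer at step~5 exhibits a maxed-out row of the current column with no hanging mass and fewer than two nonzero $\textbf{Q}$-entries, and step~3 has just ensured every maxed-out row of that column has at least one such entry, so that row has exactly one and is a valid choice at step~6. The row $x$ chosen there is re-hit by the next execution of step~3, at the column of its rightmost maxed-out entry, which Remark~\ref{WF-prop} and the sub-claim force to be fractional and strictly to the right; so $x$ then has two nonzero $\textbf{Q}$-entries and can never be reselected, whence step~6 fires at most $M$ times, the loop halts, and step~9 is a single finite sweep. (In fact the sub-claim shows the chain visits at most two columns, since every row maxed out in the second visited column is also maxed out in the first and thus already carries two $\textbf{Q}$-entries once the second is processed, sending step~5 straight to step~9 --- but only finiteness is needed here.) This is the step I expect to be the main obstacle: one must rule out the chain overwriting an earlier $\textbf{Q}$-entry or endowing a row with a third nonzero entry, and the clean route is exactly the ``identical fill'' sub-claim together with strict rightward motion of the visited columns.

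For property~2, each execution of step~3 writes a single sign ($+1$ or $-1$, alternating via step~4) into every maxed-out entry of a fractional column, so each nonzero $\textbf{Q}$-entry is $\pm 1$ and lies in a maxed-out position. If the chain visits two columns, the sub-claim makes the rows maxed out in the first precisely those maxed out in the second, so each such row gets one $+1$ and one $-1$ and already sums to zero; if it visits only one column, that is because every maxed-out row of it has hanging mass, so such a row carries the lone $+1$ and step~9 adds a compensating $-1$ at its unique hanging entry (unique by Remark~\ref{WF-prop}); all other rows are untouched, so every row of $\textbf{Q}$ sums to zero. For property~3, since $\textbf{Q}$ is supported on fractional columns and all maxed-out entries of a given column receive the same sign, the column maxima of $\textbf{P}+\delta\textbf{Q}$ are those of $\textbf{P}$ shifted by $\pm\delta$ on the affected columns and unchanged otherwise, so maxed-out entries stay maxed out and hanging entries stay hanging. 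Taking $\delta$ below every nonzero hanging value and below every positive quantity $1-s$, where $s$ is a partial row sum recorded at a maxed-out entry that is not that row's last nonzero entry --- finitely many strictly positive numbers --- keeps all entries in $[0,1]$, keeps every row sum at $1$ by property~2, and preserves each row's left-to-right fill, so $\textbf{P}+\delta\textbf{Q}$ is again a water-filled protection scheme, completing the lemma.
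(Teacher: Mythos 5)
Your sub-claim about the leftmost fractional column $\hat{y}$ and your arguments for properties 1 and 3 are essentially sound (and track the paper's own reasoning), but the argument for property 2 rests on two structural assertions that are false, and the case analysis built on them does not cover what actually happens. You claim that the chain visits at most two columns and that the rows maxed out in the first visited column are precisely those maxed out in the second. Condition 1 of Definition~\ref{stairNondec} only forces infinities to propagate \emph{downward} within a column, so a lower row may have $c(x,\hat{y})=\infty$ while having finite cost further to the right --- indeed this is exactly the upper-triangular situation the paper highlights as typical. Such a row deposits nothing at $\hat{y}$ during water-filling, is not maxed out there, yet can be maxed out at the second visited column; it then receives only one $\pm 1$ when that column is processed, and if it has no hanging mass, step 5 forces the chain to continue to a third column. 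Concretely, take $\mathcal{X}=\{x_1,x_2\}$, $\mathcal{Y}=\{y_1,y_2,y_3,y_4\}$, costs $c(x_1,\cdot)=(1,2,3,4)$ and $c(x_2,\cdot)=(\infty,1,2,3)$, and column maxima $(0.4,\,0.6,\,0.4,\,0)$, giving the water-filled scheme with rows $(0.4,\,0.6,\,0,\,0)$ and $(0,\,0.6,\,0.4,\,0)$. The procedure of Definition~\ref{QGen} visits $y_1$, then $y_2$, and at $y_2$ the row $x_2$ is maxed out with no hanging mass and a single $\textbf{Q}$-entry, so the chain proceeds to $y_3$. Your two-case analysis (one column visited, or two columns with identical maxed-out row sets) simply does not apply here, and applied naively it would leave $x_2$ with row sum $-1$.

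The statement that is actually true, and that the paper proves, is weaker and chain-length-agnostic: every row touched by step 3 that has no hanging mass ends up with exactly two nonzero entries of alternating sign (the paper's trichotomy into rows with hanging mass, ``critical'' rows selected at step 6, and non-critical touched rows, the last handled via the majorization of lower rows by upper rows in a water-filled scheme). To repair your proof you would need to replace the ``at most two columns'' picture with an induction along the chain showing that each newly visited column closes out (gives a second, opposite-signed entry to) every previously touched non-hanging row that is maxed out there, while any non-hanging row receiving its \emph{first} entry there triggers a further step that will later close it out. A smaller point worth flagging: your ``fill identically from $\hat{y}$ onward'' step silently uses condition 2 of Definition~\ref{stairNondec} (finite costs form a suffix of each row) to rule out infinite-cost skips among the rows maxed out at $\hat{y}$; without that, rows with the same partial sum at $\hat{y}$ need not fill identically afterward.
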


\begin{proof}[Lemma \ref{validQ}.1]
Since we never choose columns that aren't fractional, any row selected in step 6 must have a maxed out entry (because we also ignore rows with fractional entries) somewhere to the right of the current $y$ column. Certainly, this procedure must terminate if the $y$ value ever reaches the right-most column (and the process may terminate earlier than that due to step 5).
\end{proof}

\begin{proof}[Lemma \ref{validQ}.2]
Since we only assign $1$ and $-1$ to entries of \textbf{Q} in alternation, this is the same as saying that each row must contain an even number of non-zero entries. We see that this is true by noting that there are three types of rows, differentiated by how their non-zero entries in \textbf{Q} (if any) are assigned during the Q-generating procedure.

If a row has hanging mass in \textbf{P}, then step 9 will necessarily adjust that row to have an even number of non-zero entries by construction. In addition, we never assign mass to hanging mass entries until step 9, when the procedure terminates, which means that all hanging mass entries are free for us to use at that point. So, rows that have hanging mass in \textbf{P} will be valid rows in \textbf{Q}.

If a row has no hanging mass in \textbf{P}, then there are two cases, depending on whether that row was ever used in step 6 to determine the next $y$ value (we'll refer to such a row as a "critical" one). Note that, due to steps 5 and 6 filtering out rows that already have 2 non-zero entries, no row will ever be used in step 6 twice (i.e. a row will be a critical row at most once).
	\begin{itemize}
		\item If the row is critical, it must be the topmost one that had only one non-zero entry in \textbf{Q} at that point of the procedure in the previous $y$th column. Step 7 guarantees that the only other non-zero entry in this row will correspond to its rightmost non-zero entry in \textbf{P}. So this row will have exactly 2 non-zero entries in \textbf{Q}, making it valid.
		\item If the row is not critical, it must either be located below one that is or not have any non-zero entries in \textbf{Q} at all.  The latter case results in a trivially valid row. In the former case, the row must have at least two non-zero entries in columns shared with the previous critical row, or else it would violate our assumptions that \textbf{P} is water-filled and the cost function is staircase nondecreasing. In addition, since \textbf{P} is water-filled, each row is majorized by all rows above it (i.e. the cumulative left-to-right sum of the upper row is no less than that of the lower row for every column). This implies that a non-critical row cannot have more than 2 non-zero entries either. 
	\end{itemize}
\end{proof}

\begin{proof}[Lemma \ref{validQ}.3]
We observe that due to step 3, we only ever change all of the maxed out entries in a column together. So, for small $\delta$, $\textbf{P}+\delta\textbf{Q}$ will remain water-filled.
\end{proof}

\begin{definition} \label{stop}
\textbf{(Stopping Conditions)}

Recall from remark \ref{pPrimeArg} that we require a particular choice of $\delta$ with various properties, as already described. We now define two choices of $\delta$ and justify properties about them in later lemmas.

Let $\delta_+ = \sup[\delta\geq 0: \textbf{P}+\delta\textbf{Q}$  is stochastic and \textbf{P} and $\textbf{P}+\delta\textbf{Q}$  are maxed out for the same entries and fractional for the same entries]

and $\delta_- = \inf[\delta\leq 0: \textbf{P}+\delta\textbf{Q}$  is stochastic and \textbf{P} and $\textbf{P}+\delta\textbf{Q}$  are maxed out for the same entries and fractional for the same entries]

Note that, by definition $\delta_+>0$ and $\delta_-<0$.
\end{definition}

\begin{lemma} \label{LinearityLemma}
\textbf{(Linearity Lemma)}

If \textbf{P} is water-filled for fixed $\alpha$ and \textbf{Q} is generated according to definition \ref{QGen}, then $\mathscr{C}+\alpha\mathscr{L}$ evaluated with $\textbf{P}+\delta\textbf{Q}$ is linear with respect to $\delta\in [\delta_-,\delta_+]$.
\end{lemma}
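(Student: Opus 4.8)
The plan is to split $\mathscr{C}+\alpha\mathscr{L}$ into its two pieces: the cost term is exactly affine in $\delta$ on the whole line, and the maximal‑leakage term is a sum over columns each of which I will argue moves affinely in $\delta$ on $[\delta_-,\delta_+]$. For the cost term, note that $\textbf{Q}$ is nonzero only on entries that step~3 of Definition~\ref{QGen} writes (maxed‑out entries of fractional columns of $\textbf{P}$) or that step~9 writes (hanging‑mass entries of $\textbf{P}$); in either case these are nonzero entries of $\textbf{P}$, hence of finite cost, since $\mathscr{C}(\textbf{P})<\infty$. Consequently $\mathscr{C}(\textbf{P}+\delta\textbf{Q})=\mathscr{C}(\textbf{P})+\delta\,\mathscr{C}(\textbf{Q})$ with $\mathscr{C}(\textbf{Q})$ a finite constant, so this contribution is affine in $\delta$ for every real $\delta$, and the problem reduces to the term $\mathscr{L}(\textbf{P}+\delta\textbf{Q})=\sum_{y\in\mathcal{Y}}\max_{x\in\mathcal{X}}\bigl(p_{xy}+\delta q_{xy}\bigr)$.

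For each column I would establish two facts. First, within any single column all entries that are maxed out in $\textbf{P}$ receive a common value $s_y\in\{-1,0,1\}$ in $\textbf{Q}$: the only way a nonzero value lands on a maxed‑out entry is through step~3, which assigns the current sign simultaneously to \emph{all} maxed‑out entries of the column it processes, while step~9 touches only hanging (hence non‑maxed‑out) entries; so no matter how many times a column is revisited, its maxed‑out entries end mutually equal, and $s_y=0$ for columns step~3 never visits. Second, by the stopping conditions of Definition~\ref{stop}, for $\delta\in(\delta_-,\delta_+)$ the set of maxed‑out entries of $\textbf{P}+\delta\textbf{Q}$ coincides with that of $\textbf{P}$, and hanging entries of $\textbf{P}$ stay strictly below their column maxima. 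Granting these, pick any entry $(x,y)$ that is maxed out in $\textbf{P}$, with $p_{xy}=m_y:=\max_{x'}p_{x'y}$ and $q_{xy}=s_y$; it is still maxed out in $\textbf{P}+\delta\textbf{Q}$, so $\max_{x'}(p_{x'y}+\delta q_{x'y})=m_y+\delta s_y$. Summing over columns gives $\mathscr{L}(\textbf{P}+\delta\textbf{Q})=\mathscr{L}(\textbf{P})+\delta\sum_{y}s_y$, affine in $\delta$, and together with the cost term $\mathscr{C}+\alpha\mathscr{L}$ is affine on $(\delta_-,\delta_+)$.

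Two technical points I would handle in passing. At the endpoints $\delta_\pm$ the "same fractional entries" requirement can fail (some entry may hit $0$ or $1$ exactly there), so the coincidence of maxed‑out sets is guaranteed only on the open interval; but $\delta\mapsto\mathscr{C}(\textbf{P}+\delta\textbf{Q})+\alpha\mathscr{L}(\textbf{P}+\delta\textbf{Q})$ is continuous, being an affine function plus a sum of maxima of affine functions, so agreement with an affine function on $(\delta_-,\delta_+)$ forces agreement on the closure $[\delta_-,\delta_+]$. Also, since a column maximum is by definition attained at a maxed‑out entry and hanging entries remain strictly below it throughout the interval, no "kink" can arise from a hanging entry overtaking a maxed‑out one. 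The main obstacle is really the first fact above — being sure the bookkeeping in Definition~\ref{QGen} never assigns conflicting $\pm1$'s to two entries sharing a column's maximum — and the cleanest route to it is the observation that step~3 always rewrites an entire column's maxed‑out entries at once, so any last write leaves them mutually equal.
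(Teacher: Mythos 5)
Your proof is correct and follows essentially the same route as the paper's: the cost term is affine in $\delta$ outright, the leakage term is affine because on $(\delta_-,\delta_+)$ the maximizing entry of each column is unchanged (so $\max_{x}(p_{xy}+\delta q_{xy})=p_{x(y)y}+\delta q_{x(y)y}$), and continuity extends linearity to the closed interval. You simply make explicit two details the paper leaves implicit --- that $\textbf{Q}$ is supported on finite-cost entries and that all maxed-out entries of a column receive a common sign in $\textbf{Q}$ --- which strengthens rather than changes the argument.
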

\begin{proof}[Lemma \ref{LinearityLemma}]
For $\delta_-<\delta<\delta_+$ and fixed $\alpha$,

\begin{align*}
	& \mathscr{C}(\textbf{P}+\delta\textbf{Q}) + \alpha\mathscr{L}(\textbf{P}+\delta\textbf{Q})\\
	&= \underset{x}{\sum}\underset{y}{\sum} p(x)c(x,y)(p_{xy}+\delta q_{xy}) + \alpha \underset{y}{\sum}\underset{x}{\max}(p_{xy}+\alpha q_{xy})\\
	&= \underset{x}{\sum}\underset{y}{\sum} p(x)c(x,y)(p_{xy}+\delta q_{xy}) + \alpha \underset{y}{\sum}(p_{x(y)y}+\alpha q_{x(y)y})
\end{align*}

where $x(y)=\underset{x}{\arg\max}\ p_{xy}$.

Since $\mathscr{C}(\textbf{P}+\delta\textbf{Q}) + \alpha\mathscr{L}(\textbf{P}+\delta\textbf{Q})$ is linear over $(\delta_-,\delta_+)$ and continuous over $[\delta_-,\delta_+]$, it is linear over $[\delta_-,\delta_+]$.
\end{proof}

\begin{lemma} \label{NoImprLemma}
\textbf{(No Improvement Lemma)}

If \textbf{P} minimizes $\mathscr{C}+\alpha\mathscr{L}$ over $S$ for fixed $\alpha$ and is water-filled and \textbf{Q} is generated according to definition \ref{QGen}, then $\frac{\partial}{\partial\delta} (\mathscr{C}+\alpha\mathscr{L}) = 0$ at $\delta=0$.
\end{lemma}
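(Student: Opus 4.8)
The plan is to combine the Linearity Lemma (Lemma~\ref{LinearityLemma}) with the optimality of $\textbf{P}$, using crucially that $\delta=0$ sits in the \emph{interior} of the admissible interval $[\delta_-,\delta_+]$. Set $g(\delta) = \mathscr{C}(\textbf{P}+\delta\textbf{Q}) + \alpha\mathscr{L}(\textbf{P}+\delta\textbf{Q})$. By Lemma~\ref{LinearityLemma}, $g$ is affine on $[\delta_-,\delta_+]$, so $g(\delta) = g(0) + m\delta$ for a single slope $m$; the quantity to control is exactly $\frac{\partial}{\partial\delta}(\mathscr{C}+\alpha\mathscr{L})\big|_{\delta=0} = m$, and it suffices to show $m=0$. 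Recall from Definition~\ref{stop} that $\delta_-<0<\delta_+$, so $0$ is strictly interior.

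First I would verify that for every $\delta\in[\delta_-,\delta_+]$ the matrix $\textbf{P}+\delta\textbf{Q}$ is a genuine protection scheme, so that $\bigl(\mathscr{L}(\textbf{P}+\delta\textbf{Q}),\mathscr{C}(\textbf{P}+\delta\textbf{Q})\bigr)\in S$. Its rows sum to $1$ by Lemma~\ref{validQ}.2; its entries stay in $[0,1]$ on this interval by the definition of $\delta_\pm$ in Definition~\ref{stop}; and its total cost is finite because every nonzero entry of $\textbf{Q}$ lies either at a maxed-out entry of a fractional column of $\textbf{P}$ or at a hanging-mass entry of $\textbf{P}$ (Definition~\ref{QGen}, steps 3 and 9), and in both cases $\textbf{P}$ itself has strictly positive mass there, hence $c(x,y)<\infty$ there since $\mathscr{C}(\textbf{P})<\infty$. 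Consequently the pair lies in $S$, and since $\textbf{P}$ minimizes $\mathscr{C}+\alpha\mathscr{L}$ over $S$, we obtain $g(\delta)\ge g(0)$ for all $\delta\in[\delta_-,\delta_+]$.

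Finally I would pin down $m$ by a two-sided argument: $g(0)+m\delta\ge g(0)$ gives $m\delta\ge 0$ for every $\delta\in[\delta_-,\delta_+]$, so evaluating at $\delta=\delta_+>0$ forces $m\ge 0$ while evaluating at $\delta=\delta_-<0$ forces $m\le 0$; hence $m=0$, which is the claim. The only step needing care is the feasibility-and-finiteness check in the middle: one must be sure the perturbation never pushes mass into an infinite-cost cell and never drives an entry outside $[0,1]$ on the \emph{closed} interval, which is precisely what Definitions~\ref{stop} and~\ref{QGen} together with Lemma~\ref{validQ} were engineered to guarantee; given those, the remainder is immediate. (Interiority of $\delta=0$ is what upgrades the one-sided optimality inequality into the two-sided conclusion $m=0$ — without it one would only get a sign constraint on a one-sided derivative.)
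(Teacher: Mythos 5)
Your proof is correct and follows essentially the same route as the paper's: the paper's own argument is a one-line contradiction ("if the derivative were nonzero, $\textbf{P}+\delta\textbf{Q}$ would perform strictly better for some $\delta$ near zero"), which is exactly your two-sided slope argument exploiting that $\delta=0$ is interior to $[\delta_-,\delta_+]$. You additionally spell out the feasibility check that $\textbf{P}+\delta\textbf{Q}$ stays in $S$ (rows summing to one, entries in $[0,1]$, finite cost), which the paper leaves implicit but which is indeed needed for the optimality comparison to apply.
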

\begin{proof}[Lemma \ref{NoImprLemma}]
If $\frac{\partial}{\partial\delta} (\mathscr{C}+\alpha\mathscr{L}) \neq 0$, then that implies that $\textbf{P}+\delta\textbf{Q}$ performs strictly better for some $\delta$ close to zero, which is a contradiction.
\end{proof}

\begin{lemma} \label{MoreDetLemma}
\textbf{(More Deterministic Lemma)}

If \textbf{P} is water-filled for fixed $\alpha$ and \textbf{Q} is generated according to Definition \ref{QGen}, then 
$R(\textbf{P}+\delta\textbf{Q})<R(\textbf{P})$ for both $\delta=\delta_-$ or $\delta=\delta_+$ as defined by Definition \ref{stop}.
\end{lemma}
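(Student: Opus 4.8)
The plan is to track how the randomness measure $R(\textbf{P}) = (\#\text{ fractional columns}) + (\#\text{ hanging entries})$ changes as we move from $\textbf{P}$ to $\textbf{P}+\delta_{\pm}\textbf{Q}$. By construction of $\textbf{Q}$ and by the definition of $\delta_+$ and $\delta_-$ in Definition~\ref{stop}, at $\delta=\delta_+$ (resp.\ $\delta=\delta_-$) at least one of the defining conditions is on the verge of being violated: either $\textbf{P}+\delta\textbf{Q}$ stops being stochastic (some entry hits $0$ or some maxed-out entry hits $1$), or the set of maxed-out/fractional entries changes. The goal is to argue that in every case this ``event'' strictly decreases $R$, and that no simultaneous countervailing increase can occur.

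First I would note that, by Lemma~\ref{validQ}.3, $\textbf{P}+\delta\textbf{Q}$ stays water-filled throughout $[\delta_-,\delta_+]$, so Remark~\ref{WF-prop} applies to it; in particular each row still has at most one hanging entry. Next, the key structural observation is that every column touched by $\textbf{Q}$ is fractional in $\textbf{P}$ (step~2 and step~7 of Definition~\ref{QGen} only ever select fractional columns), and within such a column $\textbf{Q}$ adds $\pm\delta$ uniformly to all maxed-out entries while leaving hanging entries in that column untouched until the final cleanup step. Hence as $\delta$ grows the common value of the maxed-out entries in a ``$+$''-signed column increases and in a ``$-$''-signed column decreases; the first time one of these common values reaches $1$ (a ``$+$'' column) or reaches the level of the former hanging entry / reaches $0$, that column becomes integral, removing it from the fractional-column count and simultaneously eliminating its hanging entry if it had one. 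I would enumerate the possible terminal events at $\delta_+$ and $\delta_-$:
\begin{itemize}
\item a ``$+$''-signed fractional column's maxed-out value reaches $1$: the column becomes integral (all other entries in those rows were already $0$ by water-filling), so the fractional-column count drops by at least one and no hanging entry is created;
\item a ``$-$''-signed fractional column's maxed-out value drops to $0$, or drops to meet that row's hanging mass: again the column becomes integral, or a hanging entry gets absorbed, decreasing $R$;
\item a hanging entry adjusted in step~9 reaches $0$ or $1$: that hanging entry disappears, decreasing the hanging count.
\end{itemize}
In each case $R$ strictly decreases, and because $\textbf{Q}$ only ever has two non-zero entries per row (Lemma~\ref{validQ}.2) and changes whole columns of maxed-out entries together, no new fractional column or new hanging entry can be introduced by the move — the support only shrinks or stays the same on the relevant columns. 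Therefore $R(\textbf{P}+\delta_{\pm}\textbf{Q}) \leq R(\textbf{P}) - 1 < R(\textbf{P})$.

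The main obstacle I anticipate is the bookkeeping in the middle case: showing that when a ``$-$''-signed column's value collapses, the probability mass that leaves it is reabsorbed by a neighboring column in a way consistent with the water-filled structure without spawning a fresh fractional column elsewhere — this is exactly where the ``even number of non-zero entries per row'' and ``majorization of lower rows by upper rows'' facts from the proof of Lemma~\ref{validQ}.2 must be invoked carefully. The second subtlety is verifying that at least one of the two stopping conditions in Definition~\ref{stop} is actually \emph{active} at each of $\delta_+$ and $\delta_-$ (i.e.\ neither is $+\infty$ or $-\infty$), which follows because the chain of columns visited by the Q-generation procedure is finite and terminates at the rightmost column, bounding how far $\delta$ can move before some entry saturates.
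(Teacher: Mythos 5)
Your proposal is correct and follows essentially the same route as the paper's proof: show that $R$ is non-increasing as $\delta$ moves through $[\delta_-,\delta_+]$ (because $\textbf{Q}$ only perturbs maxed-out entries of fractional columns, and does so uniformly within each column), and then observe that by the definition of $\delta_\pm$ some maxed-out/fractional membership must change at the endpoint, forcing a strict drop in $R$. Your version simply makes explicit the case enumeration and the finiteness of $\delta_\pm$ that the paper leaves implicit.
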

\begin{proof}[Lemma \ref{MoreDetLemma}]
As $\delta$ increases from 0 to $\delta_+$, some fractional entries of $\textbf{P}+\delta\textbf{Q}$ change, and none of the integral entries change. In addition, if one maxed out entry changes, all of the maxed out entries in that column change together. It thus follows that the set of fractional columns can only decrease with $\delta$ and that the set of hanging entries likewise can only decrease. So $R(\textbf{P})$ is nonincreasing in $\delta$ for $\delta\in[0,\delta_+]$. From the definition of $\delta_+$ in Definition \ref{stop}, $R(\textbf{P}+\delta_+\textbf{Q})<R(\textbf{P})$.

Similarly, we can show that $R(\textbf{P}+\delta_-\textbf{Q})<R(\textbf{P})$.
\end{proof}

\begin{proof}[Theorem \ref{mainThm}.1]
Any \textbf{P} that minimizes $\mathscr{C}+\alpha\mathscr{L}$ for  some $\alpha$ can be chosen to be optimizing and water-filled as per Lemma \ref{waterFillingLemma}. If \textbf{P} is not a deterministic protection scheme, we can select \textbf{Q} as in Definition \ref{QGen} with the properties shown in Lemma \ref{validQ}. 

By Lemmas \ref{LinearityLemma}, \ref{NoImprLemma}, \ref{MoreDetLemma}, we know $\mathscr{C}(\textbf{P}+\delta\textbf{Q})+\alpha\mathscr{L}(\textbf{P}+\delta\textbf{Q})$ is constant over $[\delta_-,\delta_+]$ and $R(\textbf{P}+\delta\textbf{Q})<R(\textbf{P})$.

If $\textbf{P}+\delta\textbf{Q}$ is not deterministic, then we can repeat the above process since it is still water-filled and minimizes $\mathscr{C}+\alpha\mathscr{L}$ for the same $\alpha$.

Eventually, after repeating this process some finite number of times, $R(\textbf{P})$ will be 0 (since the function we defined is always nonnegative), and therefore deterministic.
\end{proof}

\subsection{Proof of Theorem \ref{mainThm}.2} \label{appThm}
\begin{proof}[Theorem \ref{mainThm}.2]
Using standard convex analysis (e.g. \cite{Rockafellar}, chapter 12), Theorem \ref{mainThm}.1 implies that that $C^*(L)$ and $C^*_d(L)$ have the same lower semi-continuous hull (or the closure, as defined by \cite{Rockafellar} chapter 7), which is equivalent to our definition of the boundary of $S$. We can see this fact as follows:

First, we note that the left and right hand sides of the equality in Theorem \ref{mainThm}.1 are the conjugate functions of $C^*(L)$ and $C^*_d(L)$, respectively. We have shown that the conjugates are equal for any $\alpha$.

Second, since $C^*(L)$ is a convex function of $L$, the conjugate of the conjugate of $C^*(L)$ is equal to the closure of $C^*(L)$ (\cite{Rockafellar}, Corollary 13.1.1).

Third, while $C^*_d(L)$ is not a convex function, its conjugate is the same as the conjugate of the closure of its convex hull. Therefore, the conjugate of its conjugate must be equal to the closure of its convex hull.

Thus, we have shown that the convex hulls of $S$ and $S_d$ are the same, since the two sets are the epigraphs of (all points in $\mathbb{R}^2$ on or above the curves defined by) the functions $C^*(L)$ and $C^*_d(L)$, respectively. 

From this fact, it trivially follows that any $(L,C)$ pair on the boundary of $S$ must also lie on the convex hull of $S$, and therefore on the convex hull of $S_d$.

Finally, as previously noted, $C^*_d(L)$ is a descending staircase-like function for $L\in [1,\infty]$. So, the convex hull of $S_d$ is given by the largest convex linear interpolation of the outer corner points of $C^*_d(L)$ (for example, see figure)

Therefore, any $(L,C)$ pair on the boundary of $S$ is achievable by a convex combination of no more than two deterministic protection schemes.
\end{proof}

\end{document}